\newtheorem{theorem}{Theorem}
\algnewcommand{\algorithmicand}{\textbf{ and }}
\algnewcommand{\algorithmicor}{\textbf{ or }}
\algnewcommand{\OR}{\algorithmicor}
\algnewcommand{\AND}{\algorithmicand}
\begin{document}

\title{Entanglement Distribution Delay Optimization in Quantum Networks with Distillation}

\author{Mahdi~Chehimi,
        Kenneth~Goodenough,
        Walid~Saad,
        Don~Towsley,
        Tony~X.~Zhou,
\thanks{M. Chehimi and W. Saad are with the Department of Electrical and Computer Engineering, Virginia Tech University, Arlington, VA USA, e-mail: (mahdic@vt.edu, walids@vt.edu)}
\thanks{Kenneth Goodenough and Don Towsley are with the Manning College of Information and Computer Sciences, University of Massachusetts Amherst, USA. e-mail: (kdgoodenough@gmail.com, towsley@cs.umass.edu)}
\thanks{Tony X. Zhou is with the Research Laboratory of Electronics, Massachusetts Institute of Technology, Cambridge, MA 02139, USA. email: (xu.zhou@ngc.com.)}\vspace{-0.9cm}}


\maketitle
\begin{abstract}
Quantum networks (QNs) that distribute entangled states over optical channels enable secure distributed quantum computing and sensing applications over next-generation optical communication networks. However, quantum switches (QSs) in such QNs, which perform entanglement distribution, have limited resources, e.g., single-photon sources (SPSs) and quantum memories, which are highly sensitive to noise and losses. These QS resources must be carefully allocated to minimize the overall entanglement distribution delay. In this paper, a \emph{QS resource allocation framework} is proposed, which jointly optimizes the average \emph{entanglement distribution delay} and \emph{entanglement distillation operations}, to enhance the end-to-end (e2e) fidelity and satisfy users application-specific minimum rate and fidelity requirements. The proposed framework considers realistic QN noise and imperfections, and includes the derivation of the analytical expressions for the average quantum memory decoherence noise parameter, and the resulting e2e fidelity after distillation. Finally, the proposed framework accounts for practical QN deployment considerations, where QSs can control 1) \emph{nitrogen-vacancy (NV) center} SPS types based on their isotopic decomposition, and 2) \emph{nuclear spin} regions based on their distance and coupling strength with the electron spin of NV centers. A simulated annealing metaheuristic algorithm is proposed to solve the QS resource allocation optimization problem. Simulation results show that the proposed framework manages to satisfy all users rate and fidelity requirements, unlike existing \emph{distillation-agnostic (DA)}, \emph{minimal distillation (MD)}, and \emph{physics-agnostic (PA)} frameworks which do not perform distillation, perform minimal distillation, and does not control the physics-based NV center characteristics, respectively. Furthermore, the proposed framework results in around $30\%$ and $50\%$ reductions in the average e2e entanglement distribution delay compared to existing PA and MD frameworks, respectively. Moreover, the proposed framework results in around $5\%$, $7\%$, and $11\%$ reductions in the average e2e fidelity compared to existing DA, PA, and MD frameworks, respectively.\vspace{-0.1cm}
\end{abstract}

\begin{IEEEkeywords}
Entanglement distribution delay, quantum communications, quantum networks, Nitrogen-vacancy centers, quantum memory decoherence.
\end{IEEEkeywords}

\IEEEpeerreviewmaketitle
\vspace{-0.55cm}
\section{Introduction}\vspace{-0.1cm}
\IEEEPARstart{Q}{uantum} networks (QNs) represent a major leap in next-generation optical communication networks as they enable the distribution of entangled quantum states between distant nodes over optical channels. As such, QNs enable a large spectrum of quantum applications, such as distributed quantum computing \cite{cacciapuoti2019quantum}, quantum clock synchronization \cite{jozsa2000quantum}, and quantum key distribution (QKD) \cite{cao2022evolution}. In this regard, \emph{star-shaped QNs}, leveraging a central quantum switch (QS) serving a set of users, play a major role in simultaneously enabling different quantum applications that have heterogeneous requirements on both the \emph{entanglement rates} and the quality, or \emph{fidelity}, of distributed entangled qubits \cite{chehimi2022physics}.

In general, the QS is responsible for satisfying the received user requests for entangled qubits while meeting their quantum application-specific heterogeneous requirements. However, a QS has limited resources, such as single-photon sources (SPSs) to generate entangled qubits, and quantum memories to store them. These resources must therefore be carefully controlled and allocated to the different users. This QS resource allocation task faces a multitude of challenges stemming from the inherent quantum noise, losses, and \emph{decoherence} effects experienced by qubits in the QN, along with the probabilistic quantum operations that must be performed. 

As such, the QS needs to optimize its resources and allocate its SPSs and quantum memories to the users by capturing tradeoffs between their distances, quantum channel conditions, and their minimum entanglement rate and fidelity requirements \cite{lee2022quantum}. Simultaneously, the same QS resources are also used to perform entanglement distillation, necessary to enhance the quantum states fidelity. Accordingly, it is challenging for the QS to \emph{perform resource allocation} for entanglement distribution while \emph{optimizing the application of entanglement distillation operations}. Here, the QS must decide which distillation protocol to apply, the number of entangled qubits to distill, and which qubits to choose. When making such decisions on entanglement distillation, the QS faces another challenge in \emph{minimizing its overall entanglement distribution delay}, a generally overlooked problem. 

\vspace{-0.5cm}
\subsection{Related Works}\vspace{-0.1cm}
Some of the aforementioned challenges were separately addressed in prior works \cite{vardoyan2019stochastic,vasantam2021stability,vardoyan2023quantum,chehimi2021entanglement,pouryousef2023quantum,gauthier2023control,panigrahy2023capacity,chehimi2023matching,9351761,dai2020quantum}, but they have not been jointly addressed. For instance, the authors in \cite{vardoyan2019stochastic} performed a theoretical analysis of the stochastic performance and aggregate capacity of a QS. Furthermore, the authors in \cite{vasantam2021stability} investigated the entanglement distribution stability of a single QS serving different users with minimum fidelity requirements. However, the works in \cite{vardoyan2019stochastic} and \cite{vasantam2021stability} focused on the QS capacity and entanglement distribution stability, and did not consider optimizing the allocation of SPSs and quantum memories to different users with heterogeneous requirements. Additionally, the works in \cite{vardoyan2019stochastic} and \cite{vasantam2021stability} did not analyze the delay associated with the entanglement distribution process.

In \cite{vardoyan2023quantum}, the authors studied the QS resource allocation within a QN utility maximization framework whose goal is to optimize the entanglement generation rate-fidelity tradeoff. Moreover, the authors in \cite{chehimi2021entanglement} studied the problem of entanglement distribution rate maximization by optimizing the allocation of entanglement generation sources while considering limited memory capacity in a star-shaped QN. Additionally, the authors in \cite{pouryousef2023quantum} proposed an optimization framework for QN planning to maximize a quantum utility function based on the achieved end-to-end (e2e) rate and fidelity in the presence of quantum memory decoherence and multiplexing. Furthermore, the work in \cite{gauthier2023control} proposed a cost-effective control architecture that allows a QS to fairly generate and distribute entangled resources over users in a QN utility maximization framework. Although  \cite{vardoyan2023quantum,chehimi2021entanglement,pouryousef2023quantum,gauthier2023control} focused on the task of resource allocation in a QS, they did not take into account the possibility of performing entanglement distillation operations, a process that is necessary to enhance the distributed states fidelity. Another limitation of the works in \cite{vardoyan2023quantum,chehimi2021entanglement,pouryousef2023quantum,gauthier2023control} is that they did not optimize the resulting delay in the QS operation.

Meanwhile, the authors in \cite{panigrahy2023capacity} performed resource allocation and QS operations scheduling, particularly entanglement swapping and distillation, in a star-shaped QN with minimum fidelity constraints. The work in \cite{chehimi2023matching} used matching theory to perform request-QS assignment and entanglement swapping-distillation scheduling in a multi-QS star-shaped QN. While the works in \cite{panigrahy2023capacity} and \cite{chehimi2023matching} considered entanglement distillation operations by QSs, they did not analyze nor optimize the entanglement distribution delay associated with the QS operations. Meanwhile, the authors in \cite{9351761} and \cite{dai2020quantum} analyzed the entanglement distribution delay, but without considering entanglement distillation. Specifically, the work in \cite{dai2020quantum} studied the queuing delay in QN where a central node received entanglement requests according to a Poisson process and proposed a quantum memory control policy to minimize the resulting average queuing delay. However, the works in \cite{9351761} and \cite{dai2020quantum} did not account for noisy quantum gates and measurement operations, and did not consider the possibility for a QS to perform entanglement distillation.

While prior works \cite{vardoyan2019stochastic,vasantam2021stability,vardoyan2023quantum,chehimi2021entanglement,pouryousef2023quantum,gauthier2023control,panigrahy2023capacity,chehimi2023matching,9351761,dai2020quantum} have provided insightful results on separate aspects of QS resource allocation, like stability, memory management, and rate maximization, they have, generally, overlooked critical aspects, like delay and distillation optimization. In particular, \emph{no prior work has studied QS resource allocation while considering realistic noise sources and jointly optimizing the entanglement distribution delay and entanglement distillation operations.} 

\vspace{-0.3cm}
\subsection{Contributions}\vspace{-0.1cm}
The main contribution of this work is a QS resource allocation framework that jointly optimizes the average entanglement distribution delay and entanglement distillation operations while managing quantum memories to satisfy heterogeneous application-specific user requirements on minimum rate and fidelity. Towards achieving this goal, we make the following key contributions: 
\begin{itemize}
    \item We introduce a novel resource allocation framework for a single QS, equipped with limited resources, serving users with heterogeneous quantum application-specific requirements. For QS resources, although our framework can be generalized to any technology, we focus on the practical utilization of nitrogen-vacancy centers (NV centers) in synthetic diamond chips as SPSs and their nuclear spins as quantum memory, since it is one of the most successful technologies to practically deploy QNs \cite{doherty2013nitrogen}. Additionally, the QS is considered to be capable of performing entanglement distillation operations to enhance the e2e fidelity of entangled states. The framework keeps track of delays resulting from different entanglement generation attempts, transmission, operations, and quantum gates necessary to generate e2e entanglement, along with delays resulting from entanglement distillation. Thus, the QS captures the e2e entanglement distribution delay, which is optimized in the resource allocation process.  
    
    \item Our proposed framework accounts for loss, and quantum noise and imperfections during entanglement generation, transmission, and storage. In particular, we derive analytical expressions for the average quantum memory decoherence noise parameter experienced by the different qubits stored in nuclear spins during entanglement generation attempts for different distillation protocols and memory types. Then, we derive an expression for the fidelity of a distributed e2e state, along with its average experienced service delay.

    \item We consider practical QN deployment aspects through an example based on NV centers in diamond, where we equip the QS with control over the selection between two NV center types based on their isotopic decomposition, along with the selection between two nuclear spin regions based on their distance and coupling strength with the NV center's electron spin. By introducing these two binary control variables, we allow the QS to balance the different tradeoffs between the entanglement generation probability of success, coherence time, and quantum gate noise and speed.

    \item We formulate an optimization problem to minimize the average entanglement distribution delay while jointly optimizing the selection of entanglement distillation protocols, SPS type, and quantum memory characteristics allocated to different users. These users are assumed to have different application-specific minimum requirements on the average entanglement generation rate and fidelity. Then, we solve this problem using a simulated annealing metaheuristic algorithm.

\end{itemize}

Simulation results show that the proposed framework manages to satisfy all users heterogeneous minimum average e2e fidelity and rate requirements, unlike existing approaches which fail to do so when the distance between users and the QS increases. Furthermore, the proposed framework results in around $30\%$ reduction in the average e2e entanglement distribution delay and around $7\%$ reduction in the average e2e fidelity compared to existing \emph{physics-agnostic} frameworks that do not control the physics-based NV center characteristics accounted for in our framework. Additionally, the proposed framework results in more than $50\%$ reduction in the average e2e delay and up to $11\%$ reduction in the average e2e fidelity compared to existing frameworks that perform \emph{minimal entanglement distillation}, in addition to around $5\%$ reduction in the average fidelity of \emph{distillation-agnostic} existing frameworks. We also analyze the different relations between the NV center type, nuclear spin region, distillation protocol, and entanglement generation parameter.

The rest of this paper is organized as follows. Section \ref{sec_system_model} describes the star-shaped QN model and QS operations. Then, Section \ref{sec_noise} includes the derivation of average memory decoherence noise and e2e fidelity expressions. Section \ref{sec_optimization} formulates the QS resource allocation optimization problem minimizing average entanglement distribution delay and its solution, while Section \ref{sec_simulations} presents simulation results and discussions. Finally, conclusions are drawn in Section \ref{sec_conclusion}.
\vspace{-0.3cm}
\section{System Model}\label{sec_system_model}
Consider a star-shaped QN consisting of a QS that serves a set $\mathcal{V}$ of $V$ users through optical fibers. The users have heterogeneous requirements in terms of their minimum required average entanglement generation rate and their corresponding minimum average fidelity, which vary depending on their quantum applications. The main task of the QS is to distribute entangled pairs of qubits among different users subject to minimum rate and fidelity requirements while minimizing the average entanglement distribution delay. 

Towards this goal, the QS is equipped with multiple independent SPSs, represented by the set $\mathcal{M}$ of $M\leq V$ synthetic diamond chips, each of which is engineered to include a single, readily-prepared NV center\footnote{We only assume one NV center to be active inside each synthetic diamond to minimize interference and simplify the analysis.}. Similarly, each user is equipped with one synthetic diamond SPS having a single NV center. The choice of NV centers in synthetic diamonds is based on recent advances in their fabrication techniques and their practicality as they provide unparalleled control over their characteristics through doping, isotopic engineering, and defect management \cite{atature2018material}. Furthermore, an optical heralding station consisting of a balanced beam splitter and two single photon detectors is placed on the optical fiber between the QS and every user. The QS operates in a time-slotted manner, where a time slot is the duration of an entanglement generation attempt by an NV center, denoted as $T_{\mathrm{att}}$. We focus on optimizing the QS operation over an extended episode of duration $T_{\mathrm{e}}$, consisting of a large number of time slots $0,1,\ldots, T_{\mathrm{e}}$, which grows infinitely to ensure the applicability of the law of large numbers, thereby facilitating the analysis of average performance. Additionally, each NV center is dedicated to only one user during an operation episode. Hence, the QS can serve $M$ users simultaneously. Thus, we denote by $\mathcal{U}$ the set of $M$ users served during an episode, where each user is at a distance $d_i, \forall i\in\mathcal{U}$ from the QS. At the beginning of an episode, for each served user $i\in\mathcal{U}$, both its NV center and one NV center from the QS attempt to emit a single photon each, entangled to their corresponding electron spins. These photons are sent over optical fiber towards the heralding station, which performs entanglement swapping, or Bell state measurement (BSM) operations, to entangle the two remote electron spins, which corresponds to a QS-user, \emph{spin-spin}, entangled link.

\vspace{-0.3cm}
\subsection{NV Centers and Entanglement Generation}\label{sec_NV_centers_entanglement_generation}\vspace{-0.1cm}
NV centers represent one of the most effective physical platforms for practically implementing SPSs and quantum memories used in QNs due to their distinct characteristics, such as operating at room temperature and their extended coherence times. The formation of an NV center involves substituting one carbon atom in the diamond lattice with a nitrogen atom, a process facilitated by irradiation. This step is followed by annealing, which induces vacancies within the lattice structure \cite{doherty2013nitrogen}. Each NV center is considered to incorporate an electron spin qubit serving as an optical interface for single photon emission to generate spin-photon entanglement, and a group of additional carbon $^{13}C$ nuclear spin qubits functioning as quantum memories, along with abundant $^{12}C$ carbon isotopes that have a nuclear spin of zero and do not contribute to the magnetic noise that causes decoherence \cite{bernien2013heralded}.

Although synthetic diamonds provide a significant control over the NV center characteristics, it is practically challenging to achieve identical properties among a large number of NV centers within synthetic diamonds \cite{brandenburg2018improving}. To capture this practical challenge, we assume the presence of two NV center types, where each type has a different probability of successfully emitting an entangled photon. In particular, type-1 NV centers are designed to optimize photon emission efficiency, while type-2 NV centers represent NV centers where suboptimal conditions result in reduced emission probabilities. For instance, NV centers in type 1 diamonds can be engineered to have more $^{12}C$ carbon isotopes in the diamond lattice, which results in a calmer environment with longer coherence times, while type 2 diamonds have fewer $^{12}C$ isotopes \cite{balasubramanian2009ultralong}.  

Here, a user's NV center can be of either type, while the QS's $M$ NV centers include $M_1$ NV centers of type-1, and $M_2$ of type-2, where $M_1 + M_2 = M$. We introduce the QS binary control variable $x_i, \forall i\in\mathcal{U}$, which takes a value of $0$ when the NV center allocated to user $i\in\mathcal{U}$ is of type-1, and takes a value of $1$ otherwise. Inside each diamond, there exists a set of $^{13}C$ nuclear spins coupled to the NV center's electron spin via magnetic dipole-dipole interactions. Each of these nuclear spins can be a quantum memory that stores qubits for extended durations through dynamic decoupling. The studied system model with all its elements is shown in Fig. \ref{fig_architecture}.

\begin{figure}
\includegraphics[width=\columnwidth]{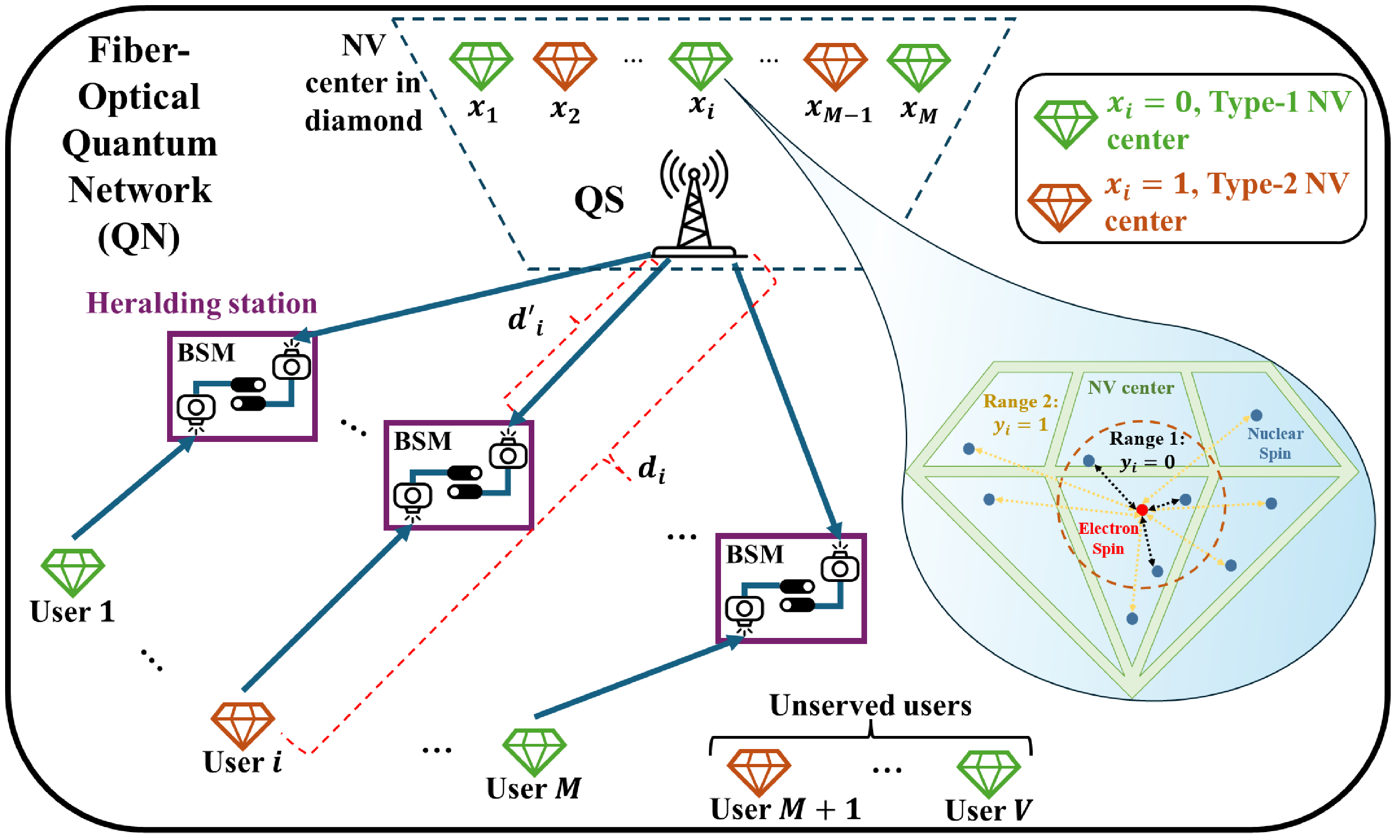}\vspace{-0.3cm}
\caption{The proposed QN system model.}
\label{fig_architecture}
\vspace{-0.6cm}
\end{figure}

On each NV center, selective microwave and optical pulses are applied to the electron spin in an attempt to emit a single photon, creating spin-photon entanglement. Note that for an emitted entangled photon to be \emph{useful}, it must fall within the zero-phonon line (ZPL) in its optical emission spectrum \cite{rozpkedek2019near}. Once an NV center successfully emits a single photon within the ZPL, the resulting ideal spin-photon entangled state, which entangles the spin with the presence or absence of an entangled photon, for user $i\in\mathcal{U}$ can be represented by:
\begin{equation}\label{eq_ideal_entangled_state}
    \ket{\psi_i^+} = \sin\theta_i\ket{\downarrow}\ket{0} + \cos\theta_i\ket{\uparrow}\ket{1}, 
\end{equation}
where $\ket{\uparrow}$ ($\ket{\downarrow}$) is the bright (dark) state of the electron spin qubit, and $\ket{1}$ ($\ket{0}$) refer to the presence (absence) of an emitted photon. Here, coherent microwave pulses can be applied to tune the parameter $\theta_i, \forall i\in\mathcal{U}$, enabling a tradeoff between entanglement generation rate and fidelity \cite{rozpkedek2019near}.


To create e2e entanglement over a link between the QS and a user, two NV centers, each at one end, attempt to generate spin-photon entanglement, and the two emitted photons are sent towards the heralding station. At this station, the two photons are overlapped to remove their which-path information. If a single photon is detected at the heralding station, then an e2e spin-spin entangled link is established between the NV centers at the QS and the respective user \cite{bernien2013heralded}.

\vspace{-0.45cm}
\subsection{Quantum Channel Losses and Noise}\vspace{-0.1cm}
The process of establishing an e2e entangled link is exposed to several sources of noise and losses due to imperfections in the entangled photon emission, collection, transmission, detection, and measurement operations. For instance, for each NV center, the probability of successfully emitting a single entangled photon that falls within the ZPL, $P_\mathrm{e}$, is affected by the type of NV center, i.e., a type-1 NV center has a higher value of $P_\mathrm{e}$. Further, if such a photon is successfully emitted, it is collected from the diamond with probability $P_{\mathrm{ce}}$, which captures the collection efficiency \cite{bernien2013heralded}. 


Thus, the operation of preparing an entangled state, as described by \eqref{eq_ideal_entangled_state}, is not ideal and it can be modeled by a quantum dephasing noise channel, with parameter $\lambda_{\mathrm{prep}}$. Consequently, a prepared spin-photon entangled state, for a user $i\in\mathcal{U}$, is not purely $\ket{\psi_i^+}\bra{\psi_i^+}$, but rather a mixture of the form $\boldsymbol{\rho}_{\mathrm{prep}}(\theta_i) = \lambda_{\mathrm{prep}}\ketbra{\psi_i^+}{\psi_i^+} + (1-\lambda_{\mathrm{prep}})(\mathbb{I}\otimes Z)\ket{\psi_i^+}\bra{\psi_i^+}(\mathbb{I}\otimes Z) = \lambda_{\mathrm{prep}}\ket{\psi_i^+}\bra{\psi_i^+} + (1-\lambda_{\mathrm{prep}}) \ket{\psi_i^-}\bra{\psi_i^-},$
where $\ket{\psi_i^-} = \sin\theta_i\ket{\downarrow}\ket{0} - \cos\theta_i\ket{\uparrow}\ket{1}$.

After preparing such a spin-photon entanglement on both the QS and user sides, the two emitted photons are sent over optical fiber towards the intermediate heralding station, which performs BSM. Throughout this transmission process, and before the BSM, the photons experience several losses due to their interactions with the surrounding environment, which scale exponentially with the travelled distance. Finally, the single-photon detectors on the heralding station have a probability, $P_\mathrm{det}$, of successfully detecting a transmitted photon that have arrived at the detector. 

To simplify the analysis, we assume that photons sent towards the heralding station from both the QS and users experience the same loss probability. Since all the aforementioned losses are independent, the overall experienced loss parameter between the QS and user $i\in\mathcal{U}$, before performing a BSM at the heralding station, is:
\begin{equation}
    \eta(x_i) = P_\mathrm{e,QS}(x_i) \times P_{\mathrm{ce}}\times e^{-\frac{d'_i(x_i)}{L_0}} \times P_\mathrm{det}, 
\end{equation}
where $L_0$ is the attenuation length of the optical fiber and
\begin{equation}\footnotesize
    d'_i(x_i) =
    \begin{cases}
    d_i/2  & \quad P_\mathrm{e,QS}(x_i) = P_\mathrm{e,user}, \\
    \frac{1}{2}\left[d_i - L_0\ln\left(\frac{P_\mathrm{e,QS}(x_i)}{P_\mathrm{e,user}}\right) \right]    & \quad P_\mathrm{e,QS}(x_i) \neq P_\mathrm{e,user}
    \end{cases},
\end{equation} where $d'_i(x_i)$ is the distance between the QS and the heralding station, chosen so as to enforce the aforementioned symmetry of losses. This pure loss channel can be modeled as an \emph{amplitude-damping noise channel} affecting the entangled photon, with damping parameter $1-\eta(x_i)$ \cite{chuang1997bosonic}. Furthermore, the lack of knowledge about the relative phase between the two photons sent to the heralding station is modeled as a \emph{dephasing noise channel}, with a dephasing parameter $\lambda_{\mathrm{ph}} = \frac{I_1 \left(\frac{1}{(\Delta\phi)^2}\right)}{2I_0 \left(\frac{1}{(\Delta\phi)^2}\right)} + \frac{1}{2}$. Here, $I_0 (I_1)$ is the Bessel function of order zero (one), and $\Delta\phi$ is the uncertainty in the phase acquired by the photons during their transmission. 

After detecting the photons and removing their which-path information at the heralding station, a BSM is performed, and the resulting initial e2e spin-spin entangled link between the QS and user $i\in\mathcal{U}$ is
\begin{equation}\label{eq_rho_initial_spin_spin}
\begin{split}
\boldsymbol{\rho}_{\mathrm{in}}&(\theta_i,x_i) =  \frac{2 \sin^2(\theta_i)}{2 - \eta(x_i) \cos^2(\theta_i)} (a \ketbra{\Psi^+}{\Psi^+} \\
&+ b \ketbra{\Psi^-}{\Psi^-}) + \frac{\cos^2(\theta_i)(2 - \eta(x_i))}{2 - \eta(x_i) \cos^2(\theta_i)} \ketbra{\uparrow\uparrow}{\uparrow\uparrow},
\end{split}
\end{equation}
where $\ket{\Psi^{\pm}} = \frac{1}{\sqrt{2}}(\ket{\downarrow\uparrow} \pm \ket{\uparrow\downarrow})$, $a = \lambda_{\mathrm{ph}} \left( \lambda^2_{\mathrm{prep}} + (1 - \lambda_{\mathrm{prep}})^2 \right) + 2 \lambda_{\mathrm{prep}} (1 - \lambda_{\mathrm{prep}})(1 - \lambda_{\mathrm{ph}}),$ and $b = (1 - \lambda_{\mathrm{ph}}) \left( \lambda^2_{\mathrm{prep}} + (1 - \lambda_{\mathrm{prep}})^2 \right) + 2 \lambda_{\mathrm{prep}} (1 - \lambda_{\mathrm{prep}}) \lambda_{\mathrm{ph}}$. Moreover, the desired Bell state is considered to be $\ket{\Psi^+}$, and the probability of successfully establishing such an e2e spin-spin entangled link is given by \cite{rozpkedek2019near}:
\begin{equation}\label{eq_prob_succ_single_spin_spin_entanglement}
    P_{\mathrm{in}}(\theta_i,x_i) = 2\eta(x_i)\cos^2(\theta_i)\left(1-\frac{\eta(x_i)}{2}\cos^2(\theta_i)\right).
\end{equation}

\vspace{-0.5cm}
\subsection{QS Quantum Memory Selection}\vspace{-0.1cm}
At each NV center, after the successful generation of a spin-photon entanglement, the electron spin qubit is moved for storage to one of the available nuclear spins inside the diamond. This frees the electron spin to participate in new entanglement generation attempts, and ensures that the qubit has a long coherence time, since the lifetime of nuclear spins can be significantly enhanced through dynamic decoupling. In general, a large number of $^{13}C$ nuclear spins exist inside each diamond. The density of nuclear spins may vary depending on the engineering complexity of the diamond growth. These nuclear spins are non-uniformly distributed around the electron spin inside a 3-dimensional diamond lattice. However, through dynamic decoupling and precise microwave pulse sequences, we can enable the selective manipulation and coherence protection of nuclear spins inside the diamond lattice \cite{fuchs2011quantum}. 

The effectiveness of a nuclear spin as a quantum memory is affected by its distance from the electron spin. This is due to the dipole-dipole interaction decreasing with the cube of the distance, $r$, between the spins ($1/r^3$). Typically, the relevant distances in diamond NV centers that affect coupling strength are on the order of a few angstroms (\AA\hspace{0cm}) to several nanometers given the size of the diamond lattice and the nuclear spins bath around the NV center \cite{abragam1961principles}. Hence, the distance between a nuclear spin and the electron spin of an NV center directly affects their coupling strength, and correspondingly impacts both the nuclear spin coherence time and the efficiency of quantum gates applied on it.

In particular, nuclear spins closer to the electron spin experience strong coupling, which enhances the fidelity of quantum gate applied on them due to reduced execution times. However, this strong coupling can reduce the coherence time of these nuclear spins. This is because the electron spin, which has a large magnetic moment compared to nuclear spins, decoheres quickly due to its strong interaction with the environment and magnetic fields, and this can significantly affect nuclear spins that are highly coupled to it \cite{bar2013solid}. In contrast, a nuclear spin that is far away from the electron spin, to an extent that still allows for control and readout, has weak coupling and generally exhibits a longer coherence time. In particular, such a nuclear spin is less affected by the electron's interactions with the external environment and by magnetic field fluctuations and interactions with other spins. However, the reduced coupling leads to lower fidelity and longer execution times for quantum gates applied on stored qubits (e.g., needed in tasks like nuclear spin initialization, control, and readout) \cite{taminiau2012detection}.


When choosing a nuclear spin to store entangled qubits with user $i\in\mathcal{U}$, the QS must consider the distance between the nuclear spins and the NV center's electron spin, as well as corresponding tradeoffs between coherence time, and quantum gate speed and fidelity. For tractability, we capture these tradeoffs by considering a simplified model of the diamond lattice, where we divide the nuclear spin bath surrounding the electron spin into two distinct regions based on the distance between nuclear spins and the electron spin. Moreover, we assume all nuclear spins within each region exhibit similar characteristics. In particular, nuclear spins in \emph{region 1} are closer to the electron spin, thus exhibiting a strong coupling and shorter coherence times. Furthermore quantum gates applied on them have less noise and shorter execution times. On the other hand, nuclear spins in \emph{region 2} are farther away from the electron spin, thus having a weak coupling and longer coherence times; quantum gates applied to them are noisier and incur longer execution times. Thus, we introduce a binary control variable $y_i, \forall i\in \mathcal{U}$, that corresponds to the selection of the region in which the nuclear spins selected to store entangled qubits for each user exist around the electron spin. Here, a value of $0$ corresponds to region 1, while a value of $1$ corresponds to region 2.
 
Accordingly, the coherence time of a nuclear spin depends on the region in which it exists, in addition to the NV center type, defined in Section \ref{sec_NV_centers_entanglement_generation}. Thus, the coherence time, $T_{\mathrm{c}}$, is a function of variables $x_i$ and $y_i, \forall i\in \mathcal{U}$. Additionally, for the parameters we will use later in our simulations, we consider $T_{\mathrm{c}}(x_i = 0, y_i = 1) \geq T_{\mathrm{c}}(x_i = 1, y_i = 1) \geq T_{\mathrm{c}}(x_i = 0, y_i = 0) \geq T_{\mathrm{c}}(x_i = 1, y_i = 0)$. Here, the distance between the nuclear and electron spins is assumed to have a more severe impact on coherence time than the NV center type, or isotopic composition, \cite{balasubramanian2009ultralong,maurer2012room}. Furthermore, for each user $i\in\mathcal{U}$, the average time needed to perform a 2-qubit gate on a nuclear spin, represented by $T_{\mathrm{gate}}(y_i)$, and the fidelity of such gates, denoted by $\alpha_{\mathrm{gate}}(y_i)$ both depend on the nuclear spin region. As discussed earlier, $T_{\mathrm{gate}}(y_i=0)\leq T_{\mathrm{gate}}(y_i=1)$, and $\alpha_{\mathrm{gate}}(y_i=0)\geq \alpha_{\mathrm{gate}}(y_i=1)$. The noise encountered during the entangled qubit generation, storage, transmission, and quantum gate application reduces the fidelity of e2e entangled qubits. Accordingly, the QS can perform entanglement distillation to enhance the e2e fidelity, as discussed next. 

\subsection{QS Entanglement Distillation Protocols}\label{sec_distillation}
In our QN model, the QS and users can apply entanglement distillation protocols based on $[n,k,d]$ encoding schemes. Here, $n$ qubit pairs are used to encode $k$ pairs, where $d$ is the distance of the code, and the resulting fidelities of the $k$ qubit pairs increase with the number of encoding pairs $n$  \cite{jansen2022enumerating}. Such protocols are derived in one-to-one correspondence with stabilizer codes, and can be implemented with Clifford circuits \cite{gottesman1997stabilizer}. We adopt the efficient entanglement distillation protocols based on optimized quantum Clifford circuits, developed in \cite{jansen2022enumerating}, where $n$ pairs are distilled to a single pair ($k=1$). The adopted protocols do not require multiple rounds of classical communications to inform the QS about the success or failure of each distillation attempt, which reduces the average QN entanglement distribution delay.

To equip the QS with the choice of different entanglement distillation circuits for different users based on their e2e fidelity requirements, we introduce an integer control variable, $z_i, \forall i\in \mathcal{U}$, that corresponds to the QS's selection of which distillation circuit to adopt for each user $i\in \mathcal{U}$. In particular, we consider seven possible values for the variable $z_i$, which correspond to: 1) $z_i = 1$: no distillation, 2) $z_i=2$: [2,1,1] distillation encoding scheme, and so on, until 7) $z_i = 7$: [7,1,3] distillation encoding scheme. Beyond $n = 7$ encoding pairs, the analysis and derived expressions for fidelity and probability of success become largely complex and intractable \cite{jansen2022enumerating}.


The probability of success of the different distillation schemes and the resulting distilled fidelity are derived in \cite{jansen2022enumerating}. As an example, when $z_i = 4$, the probability of successfully distilling entanglement for user $i\in\mathcal{U}$ is  \cite{jansen2022enumerating}
\begin{equation}\label{eq_distillation_p_success}\footnotesize
    P_{\mathrm{dis}}(F_{\mathrm{input}},z_i=4) = \frac{32}{27}F_{\mathrm{input}}^4 - \frac{4}{9}F_{\mathrm{input}}^2 + \frac{4}{27}F_{\mathrm{input}} + \frac{1}{9}, 
\end{equation}
where $F_{\mathrm{input}}$ represents the fidelity of the $z_i$ stored entangled qubit pairs for user $i\in\mathcal{U}$ before applying the entanglement distillation protocol. Similarly, for the case of $z_i = 4$, the resulting fidelity after distillation for user $i\in\mathcal{U}$ is \cite{jansen2022enumerating}: 
\begin{equation}\label{eq_distillation_protocol_fidelity}\footnotesize
     F_{\mathrm{dis}}(F_{\mathrm{input}},z_i=4) = \frac{\frac{8}{9}F_{\mathrm{input}}^4 + \frac{8}{27}F_{\mathrm{input}}^3 - \frac{2}{9}F_{\mathrm{input}}^2 + \frac{1}{27}}{\frac{32}{27}F_{\mathrm{input}}^4 - \frac{4}{9}F_{\mathrm{input}}^2 + \frac{4}{27}F_{\mathrm{input}} + \frac{1}{9}}.
\end{equation}


For each entanglement distillation Clifford circuit, based on $z_i$, the average gate noise associated with the circuit is modeled as a function of the number of its 2-qubit gates. This is because 2-qubit gates are significantly noisier than single-qubit gates. For each user $i\in\mathcal{U}$, the number of 2-qubit gates, $N_{\mathrm{gate}}$, in each Clifford circuit of the considered distillation schemes ($z_i=1,\ldots,7$) are 0, 1, 2, 4, 7, 8, 11 gates respectively. Here, we assume a simplified model where the average 2-qubit gate noise scales exponentially with $N_{\mathrm{gate}}$, as follows:  
\begin{equation}\label{eq_gate_noise_parameter}
    \lambda_{\mathrm{gate}}(y_i,z_i) = \left(\alpha_{\mathrm{gate}}(y_i)\right)^{N_{\mathrm{gate}}(z_i)}.
\end{equation}

Furthermore, the average time needed to run the entanglement distillation gates and circuits for user $i\in\mathcal{U}$ is assumed to be proportional to the depth of those circuits, $N_{\mathrm{depth}}(z_i)$, and the average time needed to perform a 2-qubit gate, $T_{\mathrm{gate}}(y_i)$, on qubits stored in nuclear spins in region $y_i$. The distillation circuit depths, $N_{\mathrm{depth}}(z_i)$, for the different protocols ($z_i=1,\ldots,7$) are \cite{jansen2022enumerating} 0, 1, 2, 3, 5, 6, 6, respectively. Accordingly, the average time to perform entanglement distillation is
\begin{equation}
    T_{\mathrm{dis}}(y_i,z_i) = N_{\mathrm{depth}}(z_i)\times T_{\mathrm{gate}}(y_i).
\end{equation}

After successfully performing entanglement distillation, the resulting e2e entangled states are ready to be used by the users for their quantum applications. However, in order to use those resulting states in the intended quantum applications, they must satisfy a user-specific minimum fidelity requirement. Thus, it is necessary to analyze the impact of the heterogeneous quantum noise on the fidelity of the resulting e2e entangled states and to derive an expression for e2e fidelity, which we delve into next.

\vspace{-0.2cm}
\section{Average E2E Noise and Entanglement Fidelity}\label{sec_noise}\vspace{-0.05cm}
As mentioned earlier, we will be optimizing the average entanglement distribution delay over an extended episode of infinitely large duration $T_{\mathrm{e}}$. Each single e2e spin-spin entanglement generation attempt between the QS and a user $i\in \mathcal{U}$, located at distance $d_i$ from the QS, succeeds with probability $P_{\mathrm{in}}(\theta_i,x_i)$. When a single e2e entangled link is successfully generated between the QS and user $i\in\mathcal{U}$, its density matrix is given by $\boldsymbol{\rho}_{\mathrm{in}}(\theta_i,x_i)$, as in \eqref{eq_rho_initial_spin_spin}. Then, based on the selected entanglement distillation protocol, the QS must ensure the successful generation of $z_i$ such e2e entangled links between the QS and user $i\in\mathcal{U}$. In particular, the QS and user continue to attempt entanglement generation until they obtain $z_i$ successes. The resulting $z_i$ pairs must be stored in the quantum memory of the QS and user $i\in\mathcal{U}$, inside the $y_i$ selected nuclear spin region. Here, the number of attempts, $N_{\mathrm{att}}(z_i)$ until obtaining $z_i$ successes can be described by a \emph{negative binomial} distribution, i.e., $N_{\mathrm{att}}(z_i)\sim \text{NB}(z_i, P_{\mathrm{in}})$, whose probability mass function is given by $P(N_{\mathrm{att}}(z_i) = k) = \binom{k+z_i-1}{z_i-1} P_{\mathrm{in}}^{z_i} (1 - P_{\mathrm{in}})^{k}, \quad k = 0, 1, 2, \ldots$. 

The $z_i$ entangled pairs are generated at different times. Thus they suffer different amounts of decoherence. Hence, we introduce $\boldsymbol{S} = [t_1, t_2, ..., t_{z_i}]$, where $t_k$ denotes the time the $k$-th spin-spin e2e entangled link is established, $k=1,\ldots ,z_i$. Note that it is of the form of a \emph{strictly increasing sequence (SIS)} of $z_i$ items, i.e. $t_k < t_{k+1}$. Here, $t_{z_i}$ corresponds to the last successful attempt after which Clifford circuits are applied on the stored qubits at both the QS and corresponding user to perform entanglement distillation. 

To characterize the fidelity of the resulting e2e entangled states after distillation, we must capture the heterogeneous memory decoherence effects experienced by the different stored qubits. Hence, we calculate the average fidelity after obtaining $z_i$ successes in the entanglement generation attempts. To do so, we derive, next, the analytical expression for the average quantum memory decoherence noise experienced before performing distillation. Then, we derive the resulting e2e entangled state fidelity expression after distillation, which must satisfy the users' application-specific minimum requirements. 

\vspace{-0.4cm}
\subsection{Average Quantum Memory Decoherence Noise}\vspace{-0.1cm}
We model the quantum memory decoherence noise as a general, worst case, quantum depolarizing noise channel. Thus, for every user $i\in\mathcal{U}$, each of the $z_i$ stored qubits, after undergoing a depolarizing noise channel with a depolarizing parameter $j\in\{1,2,...,z_i\}$, will be represented by: $\boldsymbol{\rho}_{\lambda_{\mathrm{d},j}}(\theta_i,x_i,y_i) = \lambda_{\mathrm{d},j}(x_i,y_i)\boldsymbol{\rho}_{\mathrm{in}}(\theta_i,x_i) + \left(1-\lambda_{\mathrm{d},j}(x_i,y_i)\right)\frac{\boldsymbol{\Pi}}{4}$, where every stored qubit experiences a different depolarizing noise parameter based on its storage time. However, having such non-uniform quantum states makes the analysis intractable.

To simplify the analysis, we apply a twirling map that maps the different non-uniform quantum states into a unified quantum state with a uniform depolarizing noise parameter equal to the average of the $z_i$ non-uniform noise parameters $\lambda_{\mathrm{d},j}(x_i,y_i), \forall j\in\{1,2,...,z_i\}$ and $i\in\mathcal{U}$. A possible such twirling map is given by randomly applying a unitary from the group generated by a cyclic shift on the stored pairs. 



Applying a twirling map in our case can be interpreted as if a permutation is performed on the stored qubits, without knowing which specific permutation it is. While this corresponds to throwing away some information, note that twirling maps are mere mathematical tools which significantly ease the analytical analysis, without affecting the obtained fidelity expressions. In our case, if the QS performs a uniformly random permutation on stored qubits, then all stored states will become uniform, and each of the non-uniform stored entangled states $\boldsymbol{\rho}_{\lambda_{\mathrm{d},j}}(\theta_i,x_i,y_i)$ becomes: $\boldsymbol{\rho}_{\lambda_{\mathrm{d,avg}}}(\theta_i,x_i,y_i,z_i) = \lambda_{\mathrm{d,avg}}(x_i,y_i,z_i)\boldsymbol{\rho}_{\mathrm{in}}(\theta_i,x_i,y_i) + \left(1-\lambda_{\mathrm{d,avg}}(x_i,y_i,z_i)\right)\frac{\Pi}{4}$, where $\lambda_{\mathrm{d,avg}}(x_i,y_i,z_i) = \frac{\sum_{j\in\{1,2,...,z_i\}} \lambda_{\mathrm{d},j}(x_i,y_i,z_i)}{z_i}, \forall i\in\mathcal{U}$ is the average noise parameter of the previously non-uniform entangled states. Moreover, if user $i\in\mathcal{U}$, has a memory decoherence time (storage time) $T_{\mathrm{c}}(x_i,y_i)$, then for \emph{a single SIS}, $\boldsymbol{S} = [t_1, t_2, ..., t_{z_i}]$, i.e., the average memory decoherence depolarizing noise parameter for the $z_i$ stored qubits is 
\vspace{-0.2cm}\begin{equation}\label{eq_noise_single_SIS}\small
    \lambda_{\mathrm{d,avg}}(x_i,y_i,z_i) = \sum_{t_k \in \boldsymbol{S}}\frac{e^{\frac{-(t_{z_i} - t_k)}{T_c(x_i,y_i)}}}{z_i},
\end{equation}
and the resulting fidelity from this noise parameter will be: \vspace{-0.2cm}
\begin{equation}\label{eq_fidelity_single_SIS}\footnotesize
    F_{\mathrm{d,avg},i}(x_i,y_i, z_i) = \frac{3\times \lambda_{\mathrm{d,avg}}(x_i,y_i,z_i) + 1}{4}.
\end{equation}


However, we must account for \emph{all possible SISs} in the considered large episode of time to get a more representative noise expression that captures the average QS memory performance. Thus, we derive, in Theorem \ref{theorem_avg_noise}, the exact expression for the average memory decoherence depolarizing noise parameter averaged over all possible SISs in the QN, denoted by $\langle \lambda_{\mathrm{d,avg}}\rangle(x_i,y_i,z_i), \forall i\in\mathcal{U}$, as follows:
\begin{theorem}\label{theorem_avg_noise}
The QS average quantum memory decoherence depolarizing noise parameter affecting $z_i$ entangled qubits stored in nuclear spins in region $y_i$ in a type-$x_i$ NV center in diamond for serving user $i\in\mathcal{U}$ averaged over all possible SISs is:
\begin{equation}\label{eq_average_decoherence_noise_parameter}\footnotesize
        \langle \lambda_{\mathrm{d,avg}} \rangle(x_i,y_i,z_i) = \frac{1-R_i q_i}{\left(1-R_i\right)z_i}\left(1-\left(\frac{R_i - R_i q_i}{1-R_i q_i}\right)^{z_i}\right),
\end{equation}
where $R_i \equiv e^{-\frac{1}{T_{\mathrm{c}}(x_i,y_i)}}$ and $q_i \equiv 1 - P_{\mathrm{in}}(\theta_i,x_i)$. Here, $P_{\mathrm{in}}(\theta_i,x_i)$ is the probability of successfully establishing an e2e spin-spin entangled link between the QS and user $i\in\mathcal{U}$ at distance $d_i$ from the QS. Moreover, $T_{\mathrm{c}}(x_i,y_i)$ represents the quantum memory coherence time for user $i\in\mathcal{U}$ based on its selected NV center type and nuclear spin region.
\end{theorem}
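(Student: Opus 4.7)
The plan is to view the random SIS $\boldsymbol{S}=[t_1,\ldots,t_{z_i}]$ through the lens of its inter-arrival gaps. Since each entanglement generation attempt succeeds independently with probability $p_i \equiv P_{\mathrm{in}}(\theta_i,x_i)$, the gaps $X_k \equiv t_k - t_{k-1}$ for $k=2,\ldots,z_i$ are i.i.d.\ geometric on $\{1,2,\ldots\}$ with parameter $p_i$, i.e.\ $\Pr(X_k=n) = p_i(1-p_i)^{n-1}$. Averaging over all SISs then amounts to averaging over these i.i.d.\ gaps, which converts the combinatorial sum over SISs into a clean product of independent expectations.

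First, starting from \eqref{eq_noise_single_SIS}, I would rewrite the per-SIS noise as
\begin{equation*}
\lambda_{\mathrm{d,avg}} = \frac{1}{z_i}\sum_{k=1}^{z_i} R_i^{\,t_{z_i}-t_k}, \qquad R_i \equiv e^{-1/T_{\mathrm{c}}(x_i,y_i)},
\end{equation*}
and note that the telescoping identity $t_{z_i}-t_k = \sum_{j=k+1}^{z_i} X_j$ expresses each exponent as a sum of $z_i-k$ i.i.d.\ geometric gaps (with the convention that the $k=z_i$ term contributes the empty sum, giving $1$).

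Second, using independence and linearity of expectation, $\langle \lambda_{\mathrm{d,avg}}\rangle = \tfrac{1}{z_i}\sum_{k=1}^{z_i} (\mathbb{E}[R_i^{X}])^{z_i-k}$, so the whole problem reduces to computing the single-gap moment generating factor
\begin{equation*}
\mathbb{E}[R_i^{X}] = \sum_{n=1}^{\infty} R_i^n\, p_i(1-p_i)^{n-1} = \frac{p_i R_i}{1-R_i q_i} = \frac{R_i - R_i q_i}{1-R_i q_i},
\end{equation*}
where $q_i = 1-p_i$ and the geometric series converges since $R_i q_i < 1$. Call this ratio $r_i$.

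Third, the sum $\tfrac{1}{z_i}\sum_{k=1}^{z_i} r_i^{\,z_i-k} = \tfrac{1}{z_i}\sum_{m=0}^{z_i-1} r_i^{m}$ is a finite geometric series, which equals $\tfrac{1}{z_i}\cdot \tfrac{1-r_i^{z_i}}{1-r_i}$. Substituting $r_i = \tfrac{R_i(1-q_i)}{1-R_i q_i}$ and simplifying the denominator via $1-r_i = \tfrac{1-R_i q_i - R_i(1-q_i)}{1-R_i q_i} = \tfrac{1-R_i}{1-R_i q_i}$ yields exactly \eqref{eq_average_decoherence_noise_parameter}. The main obstacle is essentially bookkeeping: one has to justify that averaging the per-SIS quantity over the negative-binomial distribution of attempts coincides with averaging over the i.i.d.\ geometric gap representation, and to handle the $k=z_i$ boundary term (empty product equal to $1$) consistently; once the gap representation is in place, the remaining algebra is a single geometric-series collapse.
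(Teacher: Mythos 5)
Your proof is correct, and it takes a genuinely different route from the paper's. The paper computes $\langle \lambda_{\mathrm{d,avg}}\rangle$ by brute-force summation over SISs: it conditions on the last success time $t_{z_i}=T$, counts the sequences with binomial coefficients $\binom{T-1}{z_i-1}$ and $\binom{T-2}{z_i-2}$, splits the resulting double sum, and evaluates the pieces using the normalization of the negative-binomial pmf and generating-function identities (including a slightly delicate boundary term $\binom{-1}{z_i-2}$). You instead exploit the renewal structure of the success process: writing $t_{z_i}-t_k=\sum_{j=k+1}^{z_i}X_j$ with i.i.d.\ geometric gaps, so that independence gives $\mathbb{E}\bigl[R_i^{\,t_{z_i}-t_k}\bigr]=\bigl(\mathbb{E}[R_i^{X}]\bigr)^{z_i-k}$ with $\mathbb{E}[R_i^{X}]=\frac{R_i(1-q_i)}{1-R_iq_i}$, and the whole average collapses to a finite geometric series; the identity $1-r_i=\frac{1-R_i}{1-R_iq_i}$ then reproduces \eqref{eq_average_decoherence_noise_parameter} exactly, including the $z_i=1$ case. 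The bookkeeping point you flag is harmless: the map $(t_1,\ldots,t_{z_i})\mapsto(t_1,X_2,\ldots,X_{z_i})$ is a bijection under which the paper's SIS weight $(1-p_i)^{t_{z_i}-z_i}p_i^{z_i}$ factors as a product of geometric pmfs, and only the gaps $X_2,\ldots,X_{z_i}$ enter the exponents, so the first arrival time integrates out trivially. Your argument is shorter, avoids the combinatorial-identity manipulations entirely, and makes transparent why the answer is a geometric series in the single-gap factor $r_i$; the paper's direct summation, by contrast, stays closer to the raw definition and does not presuppose familiarity with the gap decomposition of negative-binomial arrivals.
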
 

\begin{proof}
See Appendix \ref{appendix_average_noise_derivation}.
\end{proof}

The corresponding average fidelity for each of the $z_i$ stored entangled qubits, averaged over all possible SISs can be directly obtained from $\langle \lambda_{\mathrm{d,avg}} \rangle(x_i,y_i,z_i)$ as follows:
\begin{equation}\label{eq_avg_fidelity}\small
    \langle F_{\mathrm{d,avg}}\rangle (x_i,y_i,z_i) = \frac{3\times \langle \lambda_{\mathrm{d,avg}} \rangle(x_i,y_i,z_i) + 1}{4}.
\end{equation}

\vspace{-0.4cm}
\subsection{E2E Entangled State Fidelity}\vspace{-0.1cm}
Now that we have analyzed the different noise sources in the QN, we study the evolution of e2e spin-spin entangled quantum states from their initial generation, through their storage in quantum memories, and final participation in entanglement distillation subject to quantum gate noise and imperfections. We focus on the resulting fidelity of the final entanglement to ensure the satisfaction of user application-specific minimum fidelity requirements. 

In particular, when a single spin-spin entangled state is successfully generated, its density matrix is given by $\boldsymbol{\rho}_{\mathrm{in}}(\theta_i,x_i)$, described in \eqref{eq_rho_initial_spin_spin}. Here, the desired Bell state to be shared between the QS and user $i\in\mathcal{U}$ is considered to be $\ketbra{\Psi^{+}}{\Psi^{+}}$. Then, based on the selected entanglement distillation protocol, $z_i$ entangled qubit pairs are stored in quantum memories of the QS and corresponding user $i\in\mathcal{U}$. As mentioned earlier, using a twirling map, the heterogeneous decoherence effects experienced by the $z_i$ entangled pairs can be represented by a unified depolarizing noise channel with a single average noise parameter, that is averaged over all possible SISs, and is given by $\langle \lambda_{\mathrm{d,avg}} \rangle(x_i,y_i,z_i)$ in \eqref{eq_average_decoherence_noise_parameter}. The stored entangled pairs after capturing quantum memory decoherence effects can be found by applying the depolarizing quantum noise channel on \eqref{eq_rho_initial_spin_spin}, which results in $\langle\boldsymbol{\rho'}_{\mathrm{in}}\rangle(\theta_i,x_i,y_i,z_i) = \langle \lambda_{\mathrm{d,avg}} \rangle(x_i,y_i,z_i)\boldsymbol{\rho}_{\mathrm{in}}(\theta_i,x_i) + (1 - \langle \lambda_{\mathrm{d,avg}} \rangle(x_i,y_i,z_i))\frac{\boldsymbol{\Pi}}{4}$. Here, since the desired Bell state is $\ket{\Psi^{-}}$, the average fidelity of the resulting states is 
\begin{equation}\footnotesize\label{eq_F_in_prime}
    \begin{split}
    &\langle F'_{\mathrm{in}}\rangle(\theta_i,x_i,y_i,z_i) = \bra{\Psi^{+}}\langle\boldsymbol{\rho'}_{\mathrm{in}}\rangle(\theta_i,x_i,y_i,z_i)\ket{\Psi^{+}}\\ &= \frac{2a \langle \lambda_{\mathrm{d,avg}} \rangle(x_i,y_i,z_i) \sin^2(\theta_i)}{2 - \eta(x_i) \cos^2(\theta_i)} + \frac{(1-\langle \lambda_{\mathrm{d,avg}} \rangle(x_i,y_i,z_i))}{4}.
    \end{split}
\end{equation}

Then, in order to apply the considered entanglement distillation circuits, proposed in \cite{jansen2022enumerating}, on the $z_i$ stored entangled pairs, we observe that the analytical expressions for the resulting fidelity and probability of success for these distillation schemes require the stored entangled pairs to be in the form of Werner states, which can be achieved by applying another twirling map operation on the stored pairs. Here, note that the fidelity of a twirled state equals the fidelity of the state before twirling \cite{bennett1996mixed}, given by $\langle F_{\mathrm{W}}\rangle (\theta_i,x_i,y_i,z_i) = \langle F'_{\mathrm{in}}\rangle(\theta_i,x_i,y_i,z_i)$. As such, the resulting Werner state can be written as: $\langle\boldsymbol{\rho}_{\mathrm{W}}\rangle (\theta_i,x_i,y_i,z_i) = \langle F_{\mathrm{W}}\rangle (\theta_i,x_i,y_i,z_i)\ketbra{\Psi^{-}}{\Psi^{-}} + \left(\frac{1-\langle F_{\mathrm{W}}\rangle (\theta_i,x_i,y_i,z_i)}{3}\right)\left(\boldsymbol{\Pi}-\ketbra{\Psi^{-}}{\Psi^{-}}\right)$.

When a distillation protocol is applied, the stored entangled qubits, represented by Werner states $\langle\boldsymbol{\rho}_{\mathrm{W}}\rangle (\theta_i,x_i,y_i,z_i)$, are affected by the applied noisy quantum gates and operations for the distillation protocol. The quantum gate noise is modeled as a depolarizing quantum noise channel with noise parameter $\lambda_{\mathrm{gate}}(y_i,z_i)$, in \eqref{eq_gate_noise_parameter}, for user $i\in\mathcal{U}$. A stored averaged Werner state $\langle\boldsymbol{\rho}_{\mathrm{W}}\rangle (\theta_i,x_i,y_i,z_i)$, after undergoing this noise channel, becomes $\langle\boldsymbol{\rho'}_{\mathrm{W}}\rangle (\theta_i,x_i,y_i,z_i) = \left(\lambda_{\mathrm{gate}}(y_i,z_i)\right)\langle\boldsymbol{\rho}_{\mathrm{W}}\rangle (\theta_i,x_i,y_i,z_i) + (1-\lambda_{\mathrm{gate}}(y_i,z_i))\left(\frac{\boldsymbol{\Pi}}{4}\right)$, and its average fidelity becomes: 
\begin{equation}\label{eq_average_fidelity_before_distillation}\small
\begin{split}
    \langle F'_{\mathrm{W}}\rangle (\theta_i,x_i,y_i,z_i)& = \lambda_{\mathrm{gate}}(y_i,z_i)\langle F_{\mathrm{W}}\rangle (\theta_i,x_i,y_i,z_i)\\ 
    &+ \frac{(1-\lambda_{\mathrm{gate}}(y_i,z_i))}{4}.
\end{split}
\end{equation} 


Now, calculating the exact resulting average fidelity and probability of success after distillation, by averaging the results of $P_{\mathrm{dis}}(F_{\mathrm{input}},z_i)$ and $F_{\mathrm{dis}}(F_{\mathrm{input}},z_i)$ over all possible SISs is intractable. Hence, to simplify the analysis, we make an approximation by calculating the e2e fidelity after distillation, $F_{\mathrm{dis}}(F_{\mathrm{input}},z_i)$, for an input fidelity that is already averaged over all possible SIS, i.e., $F_{\mathrm{dis}}(\langle F_{\mathrm{input}}\rangle,z_i)$ instead of averaging $\langle F_{\mathrm{dis}}(F_{\mathrm{input}},z_i)\rangle$ itself over all possible SISs. We make a similar approximation for averaging the distillation probability of success expression over all possible SISs, and the two approximations made are summarized as: 

\begin{subequations}\label{eq_approximation}\small
    \begin{equation}
    \langle F_{\mathrm{dis}}(F_{\mathrm{input}},z_i)\rangle \approx F_{\mathrm{dis}}(\langle F_{\mathrm{input}}\rangle,z_i),
    \end{equation}
    \begin{equation}\small
    \langle P_{\mathrm{dis}}(F_{\mathrm{input}},z_i)\rangle \approx P_{\mathrm{dis}}(\langle F_{\mathrm{input}}\rangle,z_i),
    \end{equation}
\end{subequations}
where the accuracy of this approximation is verified through Monte Carlo simulations over one million possible SIS instances, see Appendix \ref{appendix_validation_approximation} for the results.

Accordingly, the average probability of success for the entanglement distillation schemes is given by:
\begin{equation}\label{eq_Prob_dis_e2e}\small
        P_{\mathrm{dis}}(\theta_i,x_i,y_i,z_i) = P_{\mathrm{dis}}(\langle F'_{\mathrm{W}}\rangle (\theta_i,x_i,y_i,z_i),z_i),
\end{equation}
and the average final e2e fidelity of entangled states distributed between the QS and user $i\in\mathcal{U}$ after distillation is given by: 
\begin{equation}\small
    F_{\mathrm{e2e}}(\theta_i,x_i,y_i,z_i) = F_{\mathrm{dis}}(\langle F'_{\mathrm{W}}\rangle (\theta_i,x_i,y_i,z_i),z_i),
\end{equation}
where $\langle F'_{\mathrm{W}}\rangle (\theta_i,x_i,y_i,z_i)$, derived in \eqref{eq_average_fidelity_before_distillation}, and is based on the average decoherence noise parameter averaged over all possible SIS, developed in Theorem \ref{theorem_avg_noise}.

\vspace{-0.3cm}
\section{Delay-Minimizing QS Resource Allocation Optimization Formulation}\label{sec_optimization}

Now that we have an expression for e2e average fidelity for an entangled state distributed between the QS and a user, we will use it to ensure minimum average fidelity requirements for the different users are satisfied. Moreover, users require different average entanglement generation rates with the smallest possible e2e delays during the almost-infinite episode of QS operation. All these goals can be achieved by carefully optimizing the allocation of QS resources, which include SPSs, quantum memories, and entanglement distillation protocols. In this section, we derive expressions for the average e2e entanglement distribution delay and average entanglement generation rate. Then, we formulate a QS resource allocation optimization problem to fairly minimize the e2e entanglement distribution delay for the different users while satisfying their minimum requirements on average e2e rate and fidelity.

\vspace{-0.31cm}
\subsection{Average Entanglement Distribution Delay}\vspace{-0.14cm}
In this subsection we focus on the problem of minimizing the total average delay associated with the successful generation of an e2e entangled pair of qubits between the QS and user $i\in\mathcal{U}$, subject to minimum required average rate and fidelity. This e2e delay must account for the time taken to perform all quantum operations, including entangled photon emission, its transmission, processing, and memory storage, as well as entanglement distillation delay. In particular, the QS and each user $i\in\mathcal{U}$ focus on entanglement generation until $z_i$ entangled photons are emitted and transferred, where each photon travels a distance $d'_i$ from the QS (or $d_i-d'_i$ from user side) towards the heralding station. There, quantum operations and BSMs are performed, where photons are detected and measured to establish spin-spin entanglement between the QS and corresponding user. 

Denote the time taken to perform every spin-photon entanglement generation attempt as $T_{\mathrm{att}}$. Moreover, let $T_{\mathrm{transfer},i} = \max(\frac{2d'_i}{c_\mathrm{l}},\frac{2(d_i-d'_i)}{c_\mathrm{l}})$ be the time needed to transfer a single entangled photon from the QS or user towards the heralding station (based on which one travels farther). Here, $c_\mathrm{l}$ is the speed of light, and the factor of $2$ is to account for the classical communication time needed to herald the BSM results. Since these operations are probabilistic, with a probability of success given by \eqref{eq_prob_succ_single_spin_spin_entanglement}, the average delay associated with the initial successful generation of $z_i$ spin-spin entangled links between the QS and user $i\in\mathcal{U}$ is $T_{\mathrm{in}}(\theta_i,x_i,z_i) = \frac{z_i \left(T_{\mathrm{transfer},i} +  T_{\mathrm{att}}\right)}{P_{\mathrm{in}}(\theta_i,x_i)}$. Note that we mainly focus on quantum operations that have an associated delay in the order of magnitude of the single-photon transfer time.

A $z_i$-based entanglement distillation circuit is applied to every successfully generated set of $z_i$ spin-spin entangled qubits between the QS and user $i\in\mathcal{U}$. This consumes time $T_{\mathrm{dis}}(y_i,z_i)$ and has a probability of success $P_{\mathrm{dis}}(\theta_i,x_i,y_i,z_i)$. Henceforth, the resulting \emph{average e2e delay for distributing an e2e entangled pair} is \vspace{-0.12cm}
\begin{equation}\small
    T_{\mathrm{e2e}}(\theta_i,x_i,y_i,z_i) = \frac{T_{\mathrm{in}}(\theta_i,x_i,z_i) + T_{\mathrm{dis}}(y_i,z_i)}{P_{\mathrm{dis}}(\theta_i,x_i,y_i,z_i)}.\vspace{-0.1cm}
\end{equation}

Accordingly, the expected average e2e entanglement generation rate between the QS and user $i\in\mathcal{U}$ is  \vspace{-0.13cm}
\begin{equation}\small
    R_{\mathrm{e2e}}(\theta_i,x_i,y_i,z_i) = \frac{1}{T_{\mathrm{e2e}}(\theta_i,x_i,y_i,z_i)}.\vspace{-0.1cm}
\end{equation}

\vspace{-0.36cm}
\subsection{QS Resource Allocation Optimization Problem}\vspace{-0.11cm}
Now, we formulate a QS resource allocation problem to fairly minimize the average delay in distributing an e2e entangled pair experienced by the different users subject to constraints on the average e2e entanglement generation rate, $R_{\mathrm{min},i}$, and fidelity $F_{\mathrm{min},i},\forall i \in \mathcal{U}$. To ensure fairness between the QN users, we consider a \emph{min-max problem}, where we allocate the QS resources so as to minimize the average e2e entangled pair distribution delay experienced by all users. 

In doing so, the considered QS resources to serve users $i\in\mathcal{U}$, where $\abs{\mathcal{U}}=M$, or the controllable optimization variables in our proposed optimization formulation are: 1) $\boldsymbol{\theta} = [\theta_1,\dots,\theta_i,\dots,\theta_M]$, the entanglement generation parameter for each user, which enables the entanglement generation rate-fidelity tradeoff, 2) $\boldsymbol{x} = [x_1,\dots,x_i,\dots,x_M]$, the NV center type allocated to each user, 3) $\boldsymbol{y} = [y_1,\dots,y_i,\dots,y_M]$, the nuclear spin region around the NV center's electron spin selected for each user, and 4) $\boldsymbol{z} = [z_1,\dots,z_i,\dots,z_M]$, the entanglement distillation scheme chosen for each user, which corresponds to the number of e2e entangled links undergoing distillation. Accordingly, the QS resource allocation optimization problem can be formulated as follows:

\begin{subequations}\small
\setlength{\abovedisplayskip}{-3mm}
\begin{alignat}{2}
\mathcal{P}1: \quad &\!\min_{\boldsymbol{\theta},\boldsymbol{x},\boldsymbol{y},\boldsymbol{z}} \max&\quad& T_{\mathrm{e2e}}(\theta_i,x_i,y_i,z_i),\quad\forall i\in\mathcal{U},\label{eq:optProb}\\
&     \mathrm{s. t.}             &      & R_{\mathrm{e2e}}(\theta_i,x_i,y_i,z_i)\geq R_{\mathrm{min},i}, \forall i\in\mathcal{U},\label{eq:constraint1}\\
&                  &      & F_{\mathrm{e2e}}(\theta_i,x_i,y_i,z_i) \geq F_{\mathrm{min},i}, \forall i\in \mathcal{U}, \label{eq:constraint2}\\
&                  &      & \sum_{i\in\mathcal{U}} {\mathbbm{1}}_{\{x_i = 0\}}\leq M_1, \quad\forall i\in\mathcal{U},\label{eq:constraint3}\\
&                  &      & \sum_{i\in\mathcal{U}} {\mathbbm{1}}_{\{x_i = 1\}}\leq M_2, \quad\forall i\in\mathcal{U},\label{eq:constraint4}\\
&                  &      & x_i, y_{i}\in\{0,1\}, \quad\forall i\in\mathcal{U},\label{eq:constraint5}\\
&                  &      & z_{i}\in\{1, 2,..., 7\}, \quad\forall i\in\mathcal{U},\label{eq:constraint6}\\
&                  &      & \theta_{i}\in(0, \frac{\pi}{2}), \quad\forall i\in\mathcal{U},\label{eq:constraint7}\vspace{-0.2cm}
\end{alignat}
\end{subequations}
where the first two constraints, \eqref{eq:constraint1} and \eqref{eq:constraint2}, ensure satisfying each user's minimum required average e2e entanglement generation rate and their average e2e fidelity, respectively. Moreover, \eqref{eq:constraint3} and \eqref{eq:constraint4} represent the QS capacity constraints that ensure that the number of allocated NV centers from type 1 and 2 do not exceed the available QS SPSs. Finally, \eqref{eq:constraint5}, \eqref{eq:constraint6}, and \eqref{eq:constraint7} specify the possible range of values for each of the optimization control variables.

\vspace{-0.4cm}
\subsection{Proposed Solution}\label{sec_solutions}\vspace{-0.1cm}
The proposed problem $\mathcal{P}1$ is a non-convex optimization problem, that is generally an \emph{NP}-hard problem. Here, we note that the derivatives of the functions in \eqref{eq:optProb}, \eqref{eq:constraint1}, and \eqref{eq:constraint2} are computationally expensive and not tractable. Accordingly, to solve $\mathcal{P}1$, we apply a simulated annealing algorithm, which is a metaheuristic, derivative-free technique, as shown in Algorithm \ref{simulated_annealing_algorithm}.

Algorithm \ref{simulated_annealing_algorithm} starts by randomly picking an initial feasible solution, and an initially high temperature. The initial solution must satisfy the different constraints of $\mathcal{P}1$, and it is set to be the initial best solution. In every iteration of the simulated annealing algorithm, for every user $i\in\mathcal{U}$, we generate a proximal, or neighbor, solution for user $i$, which represents a minor random adjustment to the current solution for that user within the feasibility bounds. Then, we calculate the resulting difference in the objective function which is the total average e2e entanglement distribution delay experienced by user $i\in\mathcal{U}$, denoted by $\Delta T_i$. If the neighbor solution for user $i\in\mathcal{U}$ results in a lower average delay for that user, it is accepted as the current solution. Moreover, it could also be accepted if it satisfies the Metropolis criterion (see line (11) in Algorithm \ref{simulated_annealing_algorithm}), which helps avoiding local minima solutions, otherwise the current solution for user $i\in\mathcal{U}$ remains unchanged. 

Then, the resulting largest average delay among the different users under their corresponding new solutions is compared to their previous largest average delay under the current best solution. If the new maximum average delay is smaller, then the new solution becomes the best solution. This process is repeated while reducing the system temperature using an exponential cooling schedule, until reaching a minimum temperature threshold. Finally, as the algorithm finishes, the resulting approximate best solutions $(\boldsymbol{\theta}^*, \boldsymbol{x}^*, \boldsymbol{y}^*, \boldsymbol{z}^*)$, that are near-optimal are returned \cite{kirkpatrick1983optimization}.  

\setlength{\textfloatsep}{0pt}
\begin{algorithm}[t]\small
\caption{Simulated Annealing Algorithm for $\mathcal{P}1$}\footnotesize\label{simulated_annealing_algorithm}
\begin{algorithmic}[1]
\State Initialize the current solution $\boldsymbol{\theta}$, $\boldsymbol{x}$, $\boldsymbol{y}$, and $\boldsymbol{z}$ to a random feasible solution in the search space
\State Set initial temperature $\tau_{\mathrm{sol}}=\tau_0$
\State Define $K$, the number of performed iterations for each temperature level
\State Initialize the optimal solution $\boldsymbol{\theta}^*$, $\boldsymbol{x}^*$, $\boldsymbol{y}^*$, and $\boldsymbol{z}^*$ to the current solution
\While{$\tau_{\mathrm{sol}}>\tau_{\min}$}
    \For{$k=1$ to $K$}
        \For{each user $i \in \mathcal{U}$}
            \State Generate random neighbors $\theta'_i$, $x'_i$, $y'_i$, and $z'_i$ for user $i$ by slightly altering $\theta_i$, $x_i$, $y_i$, and $z_i$, respectively
             \If{all generated neighbors $\theta'_{i}$, $x'_{i}$, $y'_{i}$, and $z'_{i},$ $\forall i \in \mathcal{U}$, satisfy $\mathcal{P}1$ constraints \eqref{eq:constraint1}, \eqref{eq:constraint2}, \eqref{eq:constraint3}, \eqref{eq:constraint4}, \eqref{eq:constraint5}, \eqref{eq:constraint6}, and \eqref{eq:constraint7}}
            \State Calculate $\Delta T_i = T_{\mathrm{e2e}}(\theta'_i,x'_i,y'_i,z'_i) - T_{\mathrm{e2e}}(\theta_i,x_i,y_i,z_i)$
                \If{$\Delta T_i < 0$ \textbf{or} $\exp(-\Delta T_i / \tau_{\mathrm{sol}}) > r\in\mathrm{Uniform}[0,1]$}
                    \State Update $\theta_i$, $x_i$, $y_i$, and $z_i$ to $\theta'_i$, $x'_i$, $y'_i$, and $z'_i$, respectively
                \EndIf
            \EndIf
        \EndFor
    \EndFor
    \State Find $m = \arg\max_{i \in \mathcal{U}} T_{\mathrm{e2e}}(\theta_i,x_i,y_i,z_i)$
    \State Calculate global $\Delta T = T_{\mathrm{e2e}}(\theta_{m},x_{m},y_{m},z_{m}) - \max_{i \in \mathcal{U}} T_{\mathrm{e2e}}(\boldsymbol{\theta}^*,\boldsymbol{x}^*,\boldsymbol{y}^*,\boldsymbol{z}^*)$
    \If{$\Delta T < 0$}
        \State Update the best solution $\boldsymbol{\theta}^*$, $\boldsymbol{x}^*$, $\boldsymbol{y}^*$, and $\boldsymbol{z}^*$ to the current solution
    \EndIf
    \State Update $\tau_{\mathrm{sol}}$ according to adopted cooling schedule, e.g., exponential cooling, as $\tau_{\mathrm{sol}} = \tau_{\mathrm{sol}} \cdot \alpha_{\mathrm{sol}}$ 
\EndWhile
\State \textbf{return} the best solution found, $(\boldsymbol{\theta}^*, \boldsymbol{x}^*, \boldsymbol{y}^*, \boldsymbol{z}^*)$
\end{algorithmic}
\end{algorithm}
\vspace{-0.2cm}
\section{Simulations and Results}\label{sec_simulations}\vspace{-0.05cm}
Now, we conduct exhaustive simulations to evaluate the efficiency and assess the performance of our proposed QS resource allocation framework and how it compares to state-of-the-art existing frameworks. The following \emph{default setup} of the QN is adopted throughout the conducted simulations: 1) users in the QN are randomly located around the QS with their distances from the QS being randomly sampled from a uniform distribution, where $d_i \sim \mathrm{Uniform}(100, 1500)$\,m, $\forall i \in \mathcal{U}$, 2) unless specified otherwise, the users' minimum required average entanglement generation rate, $R_{\mathrm{min},i}, \forall i\in\mathcal{U}$, is assumed to be sampled from a uniform distribution such that $R_{\mathrm{min},i} \sim \mathrm{Uniform}[1,1000]$\,Hz, 3) users' corresponding minimum required average e2e fidelity, $F_{\mathrm{min},i}\forall i \in\mathcal{U}$, is also sampled from a uniform distribution such that $F_{\mathrm{min},i} \sim \mathrm{Uniform}[0.85,0.99]$, which accounts for the requirements of different quantum applications \cite{chehimi2024reconfigurable}, and 4) we assume low phase uncertainty of transmitted entangled photons, and we consider nuclear spin coherence time in the proximity of one minute \cite{bradley2019ten,bartling2022entanglement}, which vary based on the isotopic decomposition of NV centers and the coupling strength between its nuclear and electron spins \cite{taminiau2012detection,doherty2013nitrogen,brandenburg2018improving}. Note here that since we do not consider a large-scale QN and we assume user to be within $1.5$\,km from the QS, we can consider the utilization of NV center-emitted entangled photons at their optical frequency, without applying frequency conversion techniques \cite{rozpkedek2019near}. The rest of the different QN parameters adopted in the default setup are provided in Table \ref{tbl:Def}. Unless specified otherwise, presented results are statistically averaged averaged over 1,000 Monte Carlo runs, and parameters of the default setup are considered.

\begin{table}[t!] 
\caption{Summary of Simulation Parameters.}\vspace{-0.2cm}
\label{tbl:Def}\footnotesize
\centering
\begin{tabular}{|p{1.3cm}|p{4cm}|p{2cm}|}
 \hline
 {\bf Parameter} & {\bf Description} & {\bf Value} \\ \hline
 $P_\mathrm{e}$ & probability of entangled photon emission into ZPL & $0.46-0.6$ \cite{hensen2015loophole,orphal2023optically,ruf2019optically}\\ \hline
 $P_\mathrm{ce}$ & probability of single-photon collection & $0.49-0.8$ \cite{bogdanovic2017design,orphal2023optically}\\ \hline
 $P_\mathrm{det}$ & probability of detecting photons at heralding station, conditioned on being emitted & $0.85$ \cite{bhaskar2020experimental}\\ \hline
 $T_{\mathrm{prep}}$ & time for a spin-spin entanglement generation attempt & $5.5\,\mu$s \cite{humphreys2018deterministic}\\ \hline 
 $\lambda_{\mathrm{prep}}$ & dephasing noise parameter for preparing a spin-photon entangled state & $0.99$ \cite{hensen2015loophole}\\ \hline 
$L_0$ & attenuation Length - visible light frequency & $0.542$\,km \cite{hensen2015loophole}\\ \hline 
$\alpha_{\mathrm{gate}}$ & 2-qubit gate fidelity & $0.97-0.999$ \cite{takou2023precise}\\ \hline
$T_{\mathrm{gate}}$ & 2-qubit gate time & $68-400\,\mu$s \cite{takou2023precise}\\ \hline 
\end{tabular} 
\end{table}

In order to benchmark the performance of our proposed framework, we compare its achieved performance with the following frameworks: 1) \emph{Delay-agnostic (DA) framework}, which represents existing entanglement distribution frameworks that optimize QS resources, without considering the QS capable of performing entanglement distillation, i.e., $z_i=1, \forall i \in\mathcal{U}$, 2) \emph{Physics-agnostic (PA) framework}, which represents the majority of existing frameworks that, unlike our proposed framework, do not consider controlling practical physics-based NV center characteristics, e.g., isotopic decomposition and nuclear spin region, and 3) \emph{Minimum distillation (MD) framework}, which considers always performing the minimum amount of distillation, that is distilling two entangled pairs to yield one entangled pair with higher fidelity, i.e., $z_i=2, \forall i \in\mathcal{U}$. Here, we apply PA frameworks by randomly choosing $x_i$ and $y_i, \forall i \in\mathcal{U}$, without being optimized.

We start by considering a QN in the default setup consisting of a QS with $M = 1$, $M_1 = 2$, and $M_2 = 2$, serving $4$ users, ordered based on their average distances from the QS, i.e., user $1$ is closest to the QS, and user $4$ is farthest. We show first, in Fig. \ref{fig_result_T_e2e}, the average achieved e2e entanglement distribution delay for the different users. We observe from Fig. \ref{fig_result_T_e2e} that our proposed framework is the only one that manages to satisfy the different user requirements, while the other frameworks fail to do so. Particularly, the DA and MD frameworks fail to satisfy the minimum fidelity requirement of users 3 and 4, and the PA framework fails to satisfy the requirements of user 4, which resembles the farthest user from the QS, on average. Particularly, while sending entangled photons directly without distillation, as in the DA framework, or performing minimal distillation, as in the MD framework, result in low average e2e delay, it is necessary, in many cases, to perform entanglement distillation protocols with more distilled qubits in order to meet the minimum required average e2e fidelity by the users. While such large distillation protocols result in increased delay, they allow the QS to successfully satisfy all user requirements, as achieved by our proposed framework. Moreover, we observe, from Fig. \ref{fig_result_T_e2e}, that it is optimal to send entangled photons directly to users 1 and 2 without performing entanglement distillation. This is mainly due to their average lower experienced channel losses due to their closeness to the QS. In serving such users, performing entanglement distillation results in increased, unnecessary, entanglement distribution delay due to the time consumed in storing entangled qubits for distillation and the time needed to run the distillation circuit. Additionally, we observe from Fig. \ref{fig_result_T_e2e} that our proposed framework results in more than $50\%$ reduction in the average e2e delay, $T_{\mathrm{e2e}}$, needed for entanglement distribution, compared to the MD framework. Furthermore, Fig. \ref{fig_result_T_e2e} shows that our proposed framework manages to achieve, on average, around $30\%$ reduction in $T_{\mathrm{e2e}}$ compared to the PA framework. This clearly shows the significant importance of our proposed control of the isotopic decomposition of NV centers and selection of nuclear spin regions, specified by the variables $x_i$ and $y_i, \forall i \in\mathcal{U}$. Moreover, we verify, through comparison with exhaustive search solution, that the obtained results from our deployed simulated annealing algorithm are almost optimal, within 1\% of the optimal results, while saving, on average, around $55\%$ of the running time. Additionally, it is noteworthy to mention that both the DA and the MD frameworks are special cases of our proposed framework. For instance, our framework chooses $z_i=1$, on average, for the first two users in Fig. \ref{fig_result_T_e2e}, which aligns with the DA framework.

\begin{figure}
\centering
    \includegraphics[scale=0.19]{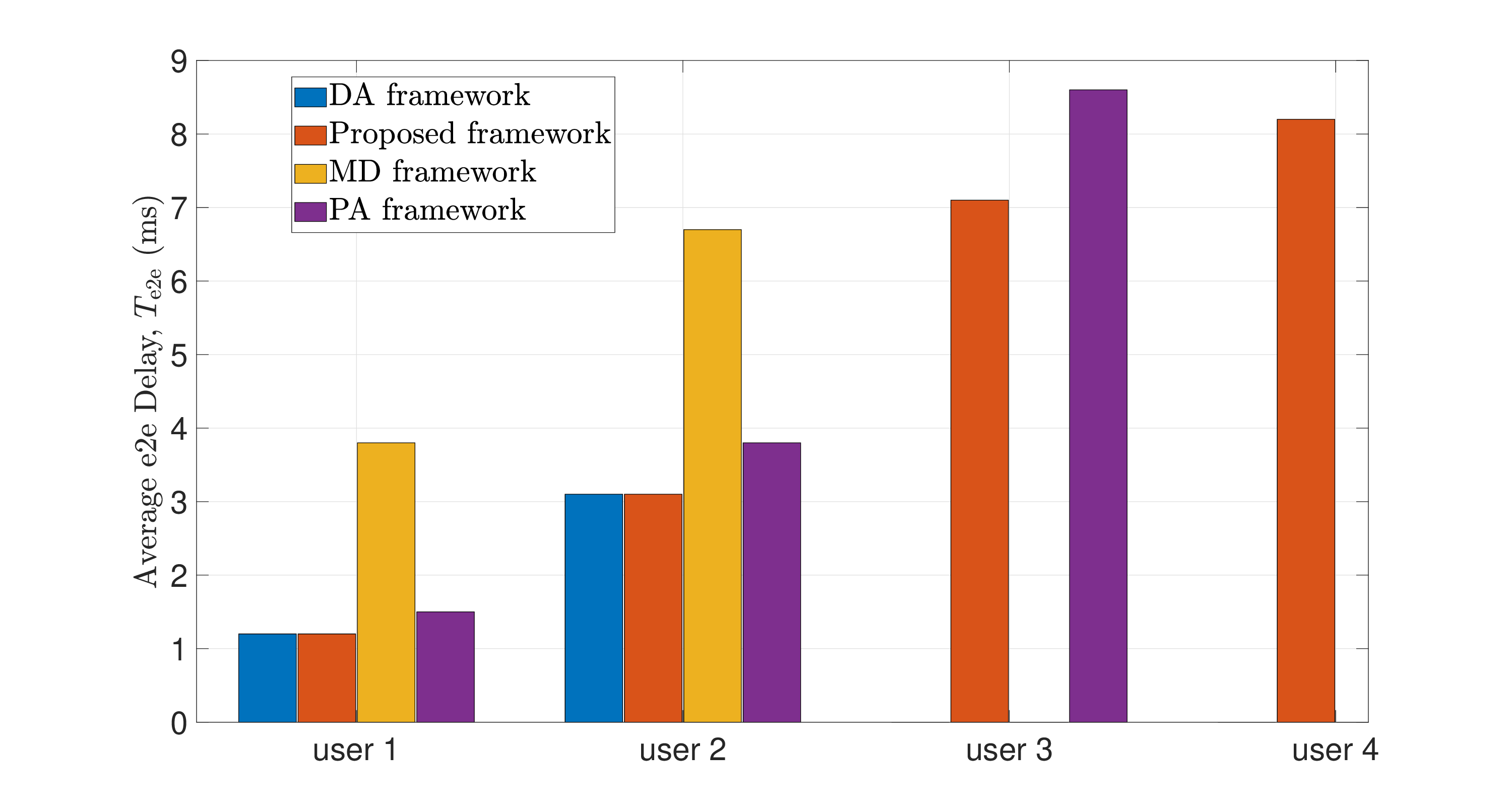}\vspace{-0.4cm}
    \caption{Average achieved e2e entanglement distribution delay, $T_{\mathrm{e2e}}$, for the different QN users.}
    \label{fig_result_T_e2e}
\end{figure}

Next, in Fig. \ref{fig_result_F_e2e}, we analyze the achieved e2e average fidelity for the QN users by the different considered benchmarks. In particular, we consider a similar QN setup as described for Fig. \ref{fig_result_T_e2e}. Here, we fix the minimum required average e2e fidelity for the different users to be $0.85$, $F_{\mathrm{min},i} = 0.85, \forall i\in\mathcal{U}$, and the minimum required average e2e rate for all users to 100\,Hz, i.e., $R_{\mathrm{min},i} = 10$\,Hz, $\forall i\in\mathcal{U}$. We observe, from Fig. \ref{fig_result_F_e2e}, that our proposed framework is the only one that manages, on average, to satisfy all user requirements in the considered QN. In particular, all considered frameworks manage to satisfy the minimum average e2e fidelity requirements, $F_{\mathrm{min},i}$ for users that are, on average, close to the QS. However, as distance increases, the DA framework fails to satisfy the minimum average e2e fidelity requirements, and we observe around $5\%$ reduction in the achieved $F_{\mathrm{e2e}}$ by the DA framework compared to our framework, which performs entanglement distillation. However, if the selection of the applied distillation protocol is not optimized, i.e., $z_i$ is fixed $\forall i\in\mathcal{U}$, as is the case with the MD framework, we observe up to $11\%$ reduction in average $F_{\mathrm{e2e}}$ achieved by the MD framework compared to our proposed framework. Furthermore, if the physics characteristics of NV centers are not controlled and optimized, i.e., $x_i$ and $y_i$ are randomly selected for every user $i\in\mathcal{U}$, as is the case with the PA framework, we observe around $7\%$ reduction in $F_{\mathrm{e2e}}$ compared to our proposed framework.

\begin{figure}
    \centering
 \includegraphics[scale=0.19]{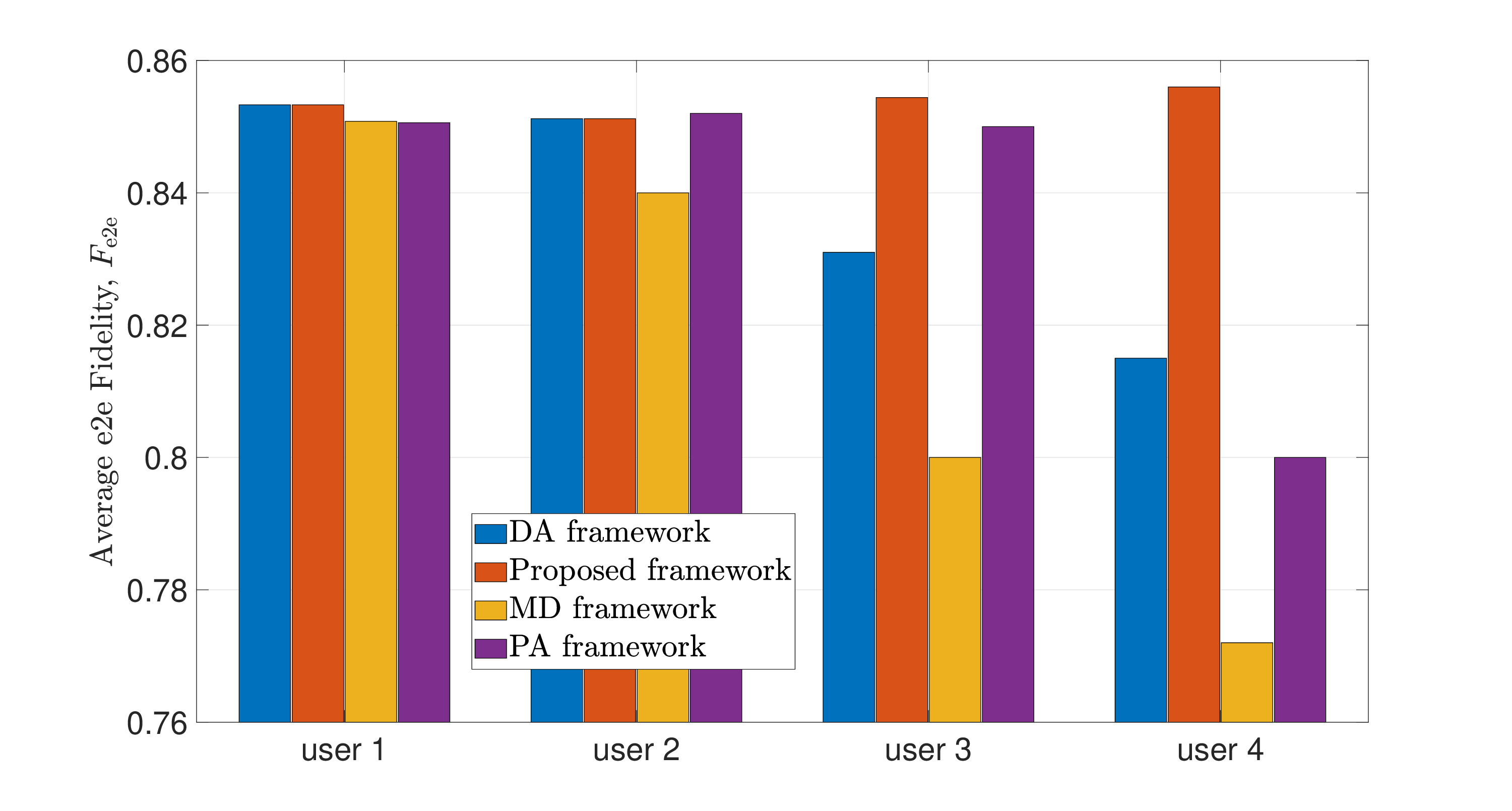}\vspace{-0.45cm}
    \caption{Average achieved e2e fidelity, $F_{\mathrm{e2e}}$, for the different QN users.}\vspace{-0.5cm}
    \label{fig_result_F_e2e}
\end{figure}

Another interesting observation from Fig. \ref{fig_result_F_e2e} is that the MD framework resulted in a lower $F_{\mathrm{e2e}}$ for the last two users compared to the DA framework. This may appear unexpected, as distillation is anticipated to yield improved fidelity, particularly when the input fidelity exceeds $0.5$. However, this is not the case, since the input fidelity to the $z_i = 2$ distillation protocol in the MD framework is not equal to the average e2e fidelity obtained through the DA framework. To see this, one must pay attention to the entanglement generation parameter for each user, i.e., $\boldsymbol{\theta}$, which has a significant impact on the resulting average e2e fidelity, and must be chosen carefully. In particular, the value of $\boldsymbol{\theta}$ chosen for the DA framework would be significantly different from the value chosen for the MD framework. Next, we delve into the impact of $\boldsymbol{\theta}$ and $\boldsymbol{y}$ on the resulting average e2e fidelity.

\begin{figure}
    \centering
 \includegraphics[scale=0.19]{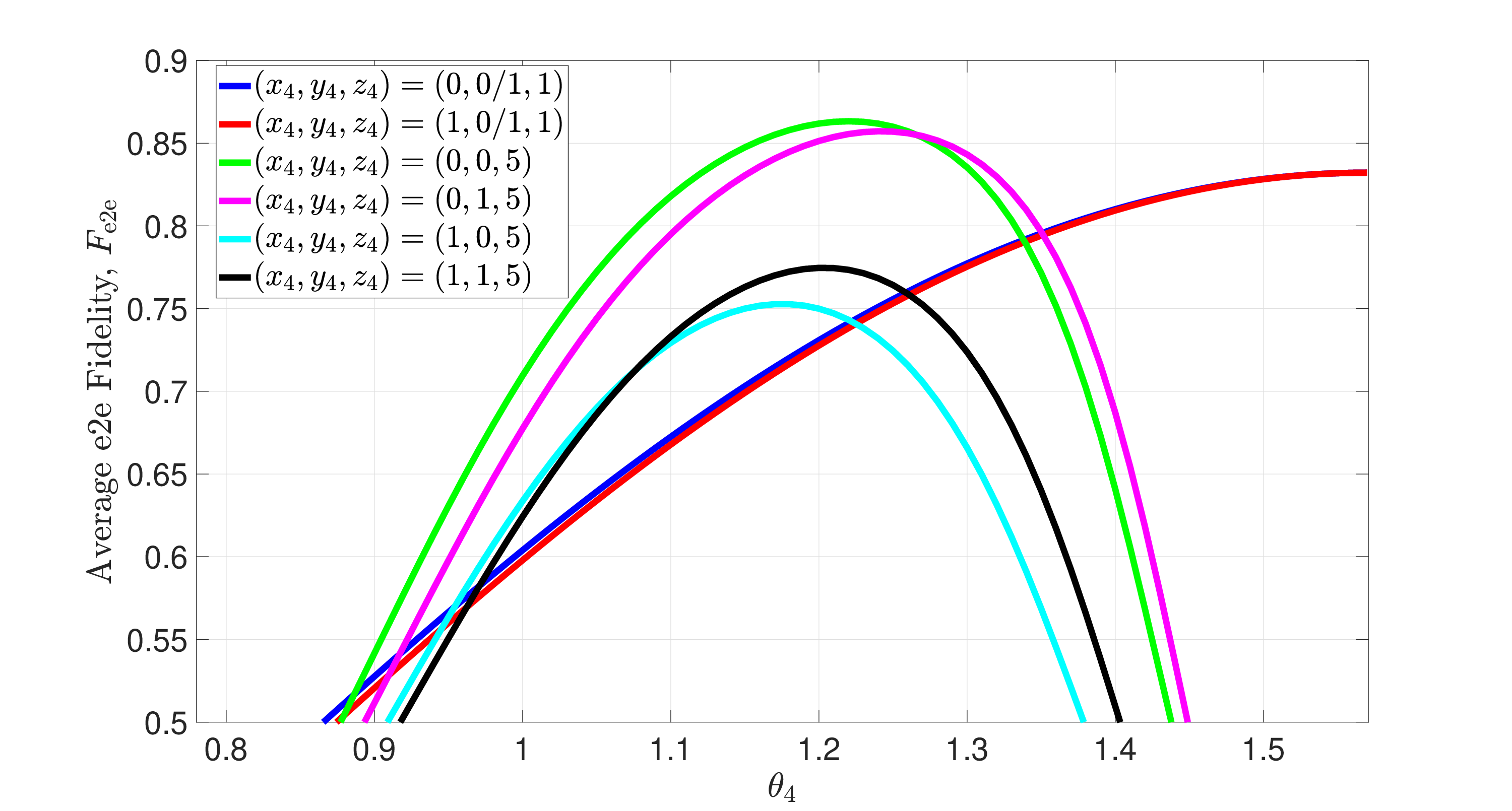}\vspace{-0.4cm}
    \caption{Average achieved e2e fidelity, $F_{\mathrm{e2e}}$, for a user $i=4$ with an average distance of $d_4 = 1$\,km from the QS vs $\theta_4$, in a 4-user QN.}
    \label{figure_result_F4_vs_theta}
\end{figure}

In Fig. \ref{figure_result_F4_vs_theta}, we focus on a single user, denoted user 4, from the previous 4-user QN setup considered for Fig. \ref{fig_result_F_e2e}. This user is relatively far away from the QS, and we analyze its achieved average e2e fidelity as its assigned $\theta_4$, $x_4$, $y_4,$ and $z_4$ values are varied. In particular, the optimal distillation protocol for user 4 is found to be $z_4 = 5$ using our proposed framework. Thus, we compare, in Fig. \ref{figure_result_F4_vs_theta}, our proposed framework and the DA framework, where $z_4 = 1$. Note that when $z_4 = 1$, no distillation is performed and thus no nuclear spin is used as a quantum memory because qubits are directly shared without storage, thus, the selection of $y_4$ becomes irrelevant in this case. We observe from Fig. \ref{figure_result_F4_vs_theta} that the DA framework can achieve a high $F_{\mathrm{e2e}}$ when $\theta_4$ takes very high values, i.e., near $\pi/2$, which is not the case when distillation is performed, since all cases with $z_4=5$ achieve a very low $F_{\mathrm{e2e}}$ at such high values of $\theta_4$ in Fig. \ref{figure_result_F4_vs_theta}. The main reason behind this behavior is the special role for $\theta_i$ in the fidelity expression, based on \eqref{eq_F_in_prime}, and the probability of success expression in \eqref{eq_prob_succ_single_spin_spin_entanglement}, for a user $i\in\mathcal{U}$. In particular, when $\theta_i$ is very high, $\sin^2(\theta)\sim 1$ and $\cos^2(\theta) \sim 0$. As such, $P_{\mathrm{in}}\sim 0$ and $\langle F'_{\mathrm{in}}\rangle \sim a\times \langle \lambda_{\mathrm{d,avg}}(x_i,y_i,z_i) \rangle + \frac{(1-\langle \lambda_{\mathrm{d,avg}} \rangle(x_i,y_i,z_i))}{4}$. But, when $z_i=1$, $\langle \lambda_{\mathrm{d,avg}} \rangle(x_i,y_i,z_i) = 1$ since there is no memory decoherence. As such, the resulting fidelity when no distillation is performed becomes significantly high at high values of $\theta_i, \forall i\in\mathcal{U}$. In contrast, when $z_i=5$, the impact of memory decoherence becomes paramount at high values of $\theta_i$.

Moreover, we observe from Fig. \ref{figure_result_F4_vs_theta} that the NV center type, or isotopic decomposition, represented by $x_4$, has a marginal effect on $F_{\mathrm{e2e}}$ when no distillation is performed ($z_4=1$). In contrast, when distillation is performed, e.g., $z_4=5$, we observe from Fig. \ref{figure_result_F4_vs_theta} that both $x_4$ and $y_4$ has a significant impact on $F_{\mathrm{e2e}}$. Note that for the considered scenario, performing distillation with type-2 NV centers, i.e., $z_4=5$ and $x_4=1$, does not result in satisfying the minimum fidelity requirement of $0.85$. However, we can observe the significant impact of changing $y_4$ from $0$ to $1$ on boosting the obtained e2e average fidelity over a wide range of possible $\theta_4$ values due to the resulting enhancement in the coherence time. However, since type-1 NV centers already have a high coherence time, the impact of varying $y_4$ on $F_{\mathrm{e2e}}$ is marginal.

Finally, for the same previous QN setup with user 4, we show, in Fig. \ref{fig_result_R4_vs_theta}, the resulting average e2e entanglement generation rate, $R_{\mathrm{e2e}}$, as its assigned entanglement generation parameter $\theta_4$ varies for different combinations of $x_4,$ $y_4,$ and $z_4$. Fig. \ref{fig_result_R4_vs_theta} clearly shows that the enhanced fidelity achieved through distillation comes at the heavy price of a significantly reduced entanglement generation rate. Specifically, the most basic distillation scheme, when $z_4=2$, results, in a perfect best case scenario where $P_{\mathrm{dis}} = 1$, in $50\%$ reduction in $R_{\mathrm{e2e}}$. This justifies the major drop in $R_{\mathrm{e2e}}$ observed in Fig. \ref{fig_result_R4_vs_theta} when $z_4 = 5$. Furthermore, while $x_4$ has a marginal effect on $F_{\mathrm{e2e}}$ when no distillation is performed ($z_4=1$) in Fig. \ref{figure_result_F4_vs_theta}, we observe from Fig. \ref{fig_result_R4_vs_theta} that $x_i$ has a significant effect on the achieved $R_{\mathrm{e2e}}$ for user $i\in\mathbf{U}$. This is because the impact of $x_i$ appears mainly in the probability of success expression for generating a spin-spin entangled link between the QS and user $i\in\mathcal{U}$. We can see a similar impact for $x_i$ when we perform distillation, but over a smaller range of $\theta_i$ values. Finally, we observe from Fig. \ref{fig_result_R4_vs_theta} that the selection of $y_i$ has a minimal effect on $R_{\mathrm{e2e}}, \forall i \in \mathcal{U}$.  

\begin{figure}
    \centering\vspace{-0.1cm}
 \includegraphics[scale=0.19]{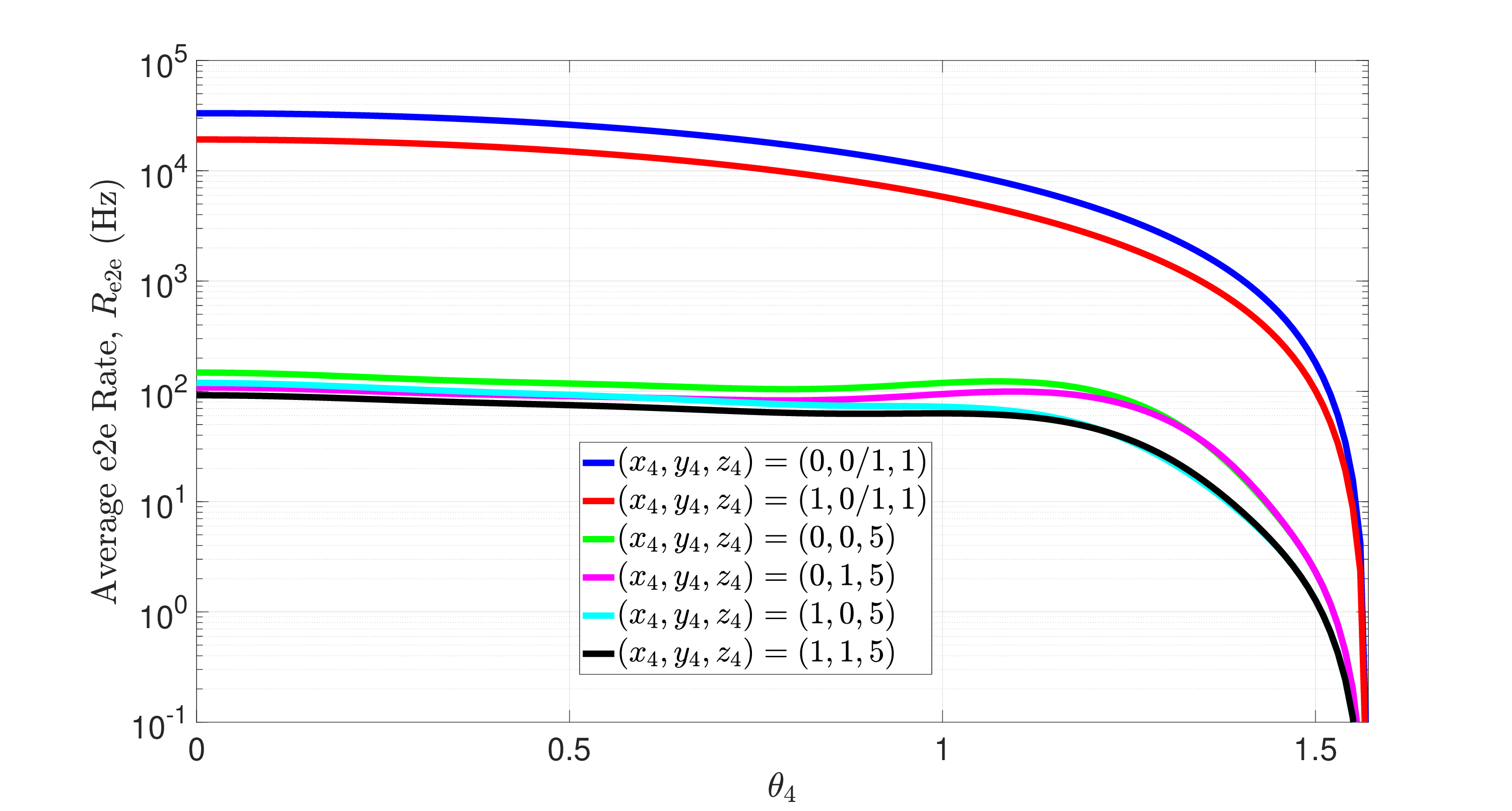}\vspace{-0.4cm}
    \caption{Average achieved e2e entanglement generation rate, $R_{\mathrm{e2e}}$, for a user $4$ with an average distance of $d_4 = 1$\,km from the QS vs $\theta_4$, in a 4-user QN.}
    \label{fig_result_R4_vs_theta}
\end{figure}







\section{Conclusion}\label{sec_conclusion}
In this paper, we have proposed a QS resource allocation framework that jointly optimizes the entanglement distribution delay and the entanglement distillation operations while satisfying heterogeneous user requirements on average e2e rate and fidelity. We consider the control of NV center characteristics like its isotopic decomposition and nuclear spin region to enhance the QS performance. Moreover, we derive the analytical expressions for average memory decoherence noise. The proposed framework is shown, through extensive simulations, to result in around $30\%$ to $50\%$ reductions in the average e2e entanglement distribution delay and around $5\%$, $7\%$, and $11\%$ reductions in the average e2e fidelity compared to existing frameworks.

\appendices
\vspace{-0.4cm}
\section{Derivation of Average Noise in Theorem \ref{theorem_avg_noise}}\label{appendix_average_noise_derivation}\vspace{-0.15cm}
We will prove here the formula for the expectation value of the QS average memory decoherence depolarizing noise parameter $\langle \lambda_{\text{d,avg}} \rangle(x_i,y_i,z_i)$, shown in \eqref{eq_average_decoherence_noise_parameter}. This noise is averaged over all possible SISs, and is considered to be uniform over $z_i$ stored entangled qubit pairs by applying a twirling map. These qubits are stored in nuclear spins in region $y_i$ in a type-$x_i$ NV center in diamond for serving user $i\in\mathcal{U}$.

For conciseness, we will set $\langle\lambda\rangle = \langle \lambda_{\text{d,avg}} \rangle(x_i,y_i,z_i)$, $R \equiv e^{-\frac{1}{T_{\text{c}}(x_i,y_i)}}$, $z\equiv z_i$, and $p\equiv P_{\mathrm{in}}(\theta_i,x_i), \forall i\in\mathcal{U}$. Here, $P_{\mathrm{in}}(\theta_i,x_i)$ is the probability of successfully establishing an e2e spin-spin entangled link between the QS and user $i\in\mathcal{U}$ at distance $d_i$ from the QS. Moreover, $T_{\text{c}}(x_i,y_i)$ represents the quantum memory coherence time for user $i\in\mathcal{U}$ based on its selected NV center type and nuclear spin region.

A specific SIS instance includes $z$ numbers, each of which corresponds to the round number each of the $z$ entangled pairs arrived at. Particularly, each instance is represented by an SIS, $\boldsymbol{S}$, of $z$ non-negative natural numbers $\left(t_1, t_2, \ldots, t_{z}\right)$. We need two ingredients to calculate the desired expectation value; the probability of observing a given sequence $\boldsymbol{S}$ and the associated average noise parameter. First, for the probability, note that for a given sequence $\boldsymbol{S}$ with $\max(\boldsymbol{S}) = t_{z}$ we have $z$ successes and $t_{z}-z$ number of failures, such that the probability is given by $\left(1-p\right)^{t_{z}-z}p^{z}$. Now, for the associated average noise parameter, note that the $k$'th successfully created pair with $1 \leq k \leq z$ has to wait until the last pair is successfully created. That is, the $k$'th pair has to wait for a time $t_{z}-t_{k}$. Since each round adds a multiplicative factor of $R$ to the noise parameter, we find that the average noise parameter for an SIS $\boldsymbol{S}$ is given by $\frac{1}{z}\sum_{k=1}^{z}R^{t_{z}-t_k}$.

We have then that the expectation value of the average noise parameter over all SISs is given by 
\begin{equation}\footnotesize\label{eq:avg_noise_derivation}
\langle\lambda\rangle = \frac{1}{z}\left(\frac{p}{1-p}\right)^z\sum_{\boldsymbol{S}} \left(1-p\right)^{s_{z}}\left(\sum_{k=1}^{z}R^{t_{z}-t_k}\right),
\end{equation}
where the sum is over all SISs. Let us write $q \equiv 1-p$ and focus on the term in the sum,
\begin{equation}\footnotesize
    \sum_{\boldsymbol{S}}q^{t_{z}}\left(\sum_{k=1}^{z}R^{t_{z}-t_k}\right)=\sum_{T=1}^{\infty} q^{T} \sum_{\mathclap{\substack{\boldsymbol{S}\textrm{ s.t. }\\t_{z}=T}}} ~\left(\sum_{k=1}^{z}R^{T-t_k}\right)\label{eq:biggersum},
\end{equation}
where we have split up the sum into one over all $\boldsymbol{S}$ with $t_z = T$ and replaced $t_z$ by $T$. 

Now, we focus on the terms after $q^T$ in \eqref{eq:biggersum}, including the inner two sums. As a warm up question, we will first consider the number of $\boldsymbol{S}$ such that $t_z = T$. The total number is given by $\binom{T-1}{T-z}=\binom{T-1}{z-1}$ since this is equivalent to asking how many ways there are to choose $T-z$ failures (or $z-1$ successes) out of $T-1$ attempts. Note that we consider $T-1$ and not $T$ attempts because by definition the $T$'th attempt is a success. This allows us to split the term into two parts, as follows
\begin{equation}\footnotesize\label{eq:spliteq}
\begin{split}
   &\sum_{\mathclap{\substack{\boldsymbol{S}\textrm{ s.t. }\\t_{z}=T}}} ~\left(\sum_{k=1}^{z}R^{T-t_k}\right)
   =\sum_{\mathclap{\substack{\boldsymbol{S}\textrm{ s.t. }\\t_{z}=T}}} ~\left(1+\sum_{k=1}^{z-1}R^{T-t_k}\right)\\
            =&~\binom{T-1}{z-1}+R^T\sum_{\mathclap{\substack{\boldsymbol{S}\textrm{ s.t. }\\t_{z}=T}}} ~\left(\sum_{k=1}^{z-1}R^{-t_k}\right)\ , 
\end{split}
\end{equation}
where in the second step we have split the sum into two by separately summing from $k=1$ to $z-1$ and the term corresponding to $k=z$, such that $t_z = T$. In the next line we used the fact that there are $\binom{T-1}{z-1}$ SISs with $s_z = T$. 

Let us focus now on the last term in \eqref{eq:spliteq} that is multiplied by $R^T$. Note that choosing an arbitrary $1\leq m\leq T-1$ to be in an SIS $\boldsymbol{S}$ leaves us with choosing $z-2$ successes out of $T-2$ attempts, such that (independently of $m$) there are $\binom{T-2}{z-2}$ such options. It is thus possible to split this term over all possible values of $m$, as follows
\begin{equation}\footnotesize
\begin{split}
\sum_{\mathclap{\substack{\boldsymbol{S}\textrm{ s.t. }\\t_{z}=T}}} ~\left(\sum_{k=1}^{z-1}R^{-t_k}\right)&= \sum_{m=1}^{T-1}\binom{T-2}{z-2}R^{-m}\\ &=\binom{T-2}{z-2}R^{-T}\frac{R-R^T}{1-R},
\end{split}
\end{equation}
which, when plugged into \eqref{eq:spliteq} gives
\begin{equation}\footnotesize
    \begin{split}
             \sum_{\mathclap{\substack{\boldsymbol{S}\textrm{ s.t. }\\t_{z}=T}}} ~\left(\sum_{k=1}^{z}R^{T -t_k}\right)  =&   ~ \binom{T-1}{z-1}+R^T\sum_{\mathclap{\substack{\boldsymbol{S}\textrm{ s.t. }\\t_{z}=T}}} ~\left(\sum_{k=1}^{z-1}R^{-t_k}\right)\\
            =& ~\binom{T-1}{z-1}+\binom{T-2}{z-2}\frac{R-R^T}{1-R},
    \end{split}
\end{equation}
and inputting this into \eqref{eq:biggersum} yields
\begin{equation}\label{eq:split_eq2}\footnotesize
\begin{split}
 &\sum_{T=1}^{\infty} q^{T} \sum_{\mathclap{\substack{\boldsymbol{S}\textrm{ s.t. }\\t_{z}=T}}} ~\left(\sum_{k=1}^{z}R^{T-t_k}\right) \\ 
 &= \left(\sum_{T=1}^{\infty} q^{T}  ~\binom{T-1}{z-1}\right)+\left(\sum_{T=1}^{\infty} q^{T}\binom{T-2}{z-2}\frac{R-R^T}{1-R}\ \right)\,
 \end{split}
\end{equation}
where the sum is split into two terms, tackled individually. 

Now, consider the first term in \eqref{eq:split_eq2}, while including the $\frac{1}{z}\left(\frac{p}{1-p}\right)^z$ factor that we dropped from \eqref{eq:avg_noise_derivation} in the beginning of the analysis, and change back to $p \equiv 1- q$, we get
\begin{equation}\footnotesize
\begin{split}
&\frac{1}{z}\left(\frac{p}{1-p}\right)^z\left(\sum_{T=1}^{\infty} q^{T}  ~\binom{T-1}{z-1}\right)\\ &=\frac{1}{z}\sum_{T=0}^{\infty}\left[\binom{T-1}{z-1}\left(1-p\right)^{T-z}p^z\right]\\
&=\frac{1}{z}\sum_{T=0}^{\infty}\left[\binom{T-1}{T-z}\left(1-p\right)^{T-z}p^z\right],
\end{split}
\end{equation}
where in the last step we have used that $\binom{n}{k}=\binom{n}{n-k}$. Note that the term between brackets is the probability mass function of a negative binomial distribution with $T-z$ failures, $z$ successes and $T$ total attempts, and success probability $p$. Since we sum over all possible values of $T$ and those exhaust all possibilities, the sum needs to equal $1$, i.e.~
\begin{equation}\footnotesize
    \frac{1}{z}\sum_{T=0}^{\infty}\left[\binom{T-1}{T-z}\left(1-p\right)^{T-z}p^z\right]=\frac{1}{z}\label{eq:firstterm}\ .
\end{equation}
which is not surprising since the last successful attempt never decoheres (since the corresponding state does not need to wait), and so it will contribute $\frac{1}{z}$ to the total average noise.

Now for to the second term in equation \eqref{eq:split_eq2},
\begin{equation}\footnotesize
\begin{split}
    \sum_{T=1}^{\infty} q^{T}\binom{T-2}{z-2}\frac{R-R^T}{1-R}
    &=\frac{R}{1-R}\left[\left(\sum_{T=1}^{\infty} q^{T}\binom{T-2}{z-2}\right)\right]\\ -& \frac{1}{1-R}\left[\left(\sum_{T=1}^{\infty} \left(Rq\right)^{T}\binom{T-2}{z-2}\right)\right] \ .
    \end{split}
\end{equation}

We see that the two terms in between the square brackets are of the form
\begin{equation}\footnotesize
\begin{split}
&\sum_{T=1}^{\infty} \mu^{T}\binom{T-2}{z-2} =  \mu^2\sum_{T=-1}^{\infty} \mu^{T}\binom{T}{z-2} \\
&=  \mu^2\left(\mu^{-1}\binom{-1}{z-2}+\sum_{T=0}^{\infty} \mu^{T}\binom{T}{z-2}\right)\\
&=  \left(-1\right)^{z}\mu+\mu^2\sum_{T=0}^{\infty} \mu^{T}\binom{T}{z-2} = \left(-1\right)^{z}\mu+\frac{\mu^{z}}{\left(1-\mu\right)^{z-1}}\ .
\end{split}
\end{equation}\vspace{-0.1cm}

It follows that
\begin{equation}\footnotesize
\begin{split}
&\frac{R}{1-R}\left[\left(\sum_{T=1}^{\infty} q^{T}\binom{T-2}{z-2}\right)\right]\\ &- \frac{1}{1-R}\left[\left(\sum_{T=1}^{\infty} \left(Rq\right)^{T}\binom{T-2}{z-2}\right)\right]\\
= & \frac{R}{1-R}\left(\left(-1\right)^{z}q+\frac{q^{z}}{{\left(1-q\right)^{z-1}}}\right)\\ &- \frac{1}{1-R}\left(\left(-1\right)^{z}Rq+\frac{\left(Rq\right)^{z}}{{\left(1-Rq\right)^{z-1}}}\right)\\
= & \frac{1}{1-R}\left(\left(-1\right)^{z}Rq+R\frac{q^{z}}{{\left(1-q\right)^{z-1}}}\right)\\ &- \frac{1}{1-R}\left(\left(-1\right)^{z}Rq+\frac{\left(Rq\right)^{z}}{{\left(1-Rq\right)^{z-1}}}\right)\\
= & \frac{1}{1-R}\left(R\frac{q^{z}}{{\left(1-q\right)^{z-1}}} - \frac{\left(Rq\right)^{z}}{{\left(1-Rq\right)^{z-1}}}\right)\ .
\end{split}
\end{equation}


Putting this back into equation \eqref{eq:split_eq2}, combined with equation \eqref{eq:firstterm}, together with the factor of $\frac{1}{z}\left(\frac{p}{1-p}\right)^z$ gives that the average noise parameter is given by
\begin{equation}\footnotesize
\begin{split}
\langle\lambda\rangle =&\frac{1}{z}\left(1+\left(\frac{1-q}{q}\right)^z\frac{1}{1-R}\left(R\frac{q^{z}}{{\left(1-q\right)^{z-1}}} - \frac{\left(Rq\right)^{z}}{{\left(1-Rq\right)^{z-1}}}\right)\right)\\
=&\frac{1-Rq}{\left(1-R\right)z}\left(1-\left(\frac{R-Rq}{1-Rq}\right)^z\right)\
\end{split}
\end{equation}
This completes the proof.

\vspace{-0.4cm}
\section{Validation of the Approximation in \eqref{eq_approximation}} \label{appendix_validation_approximation}\vspace{-0.1cm}
We conduct numerical experiments to validate the approximation in \eqref{eq_approximation}. Particularly, we compare the exact numerical average distillation fidelity, $\langle F_{\text{dis}}(F_{\text{input}},z_i)\rangle$, and probability of success, $\langle P_{\text{dis}}(F_{\text{input}},z_i)\rangle$ with their corresponding approximate averages proposed in \eqref{eq_approximation}, which are $F_{\text{dis}}(\langle F_{\text{input}}\rangle,z_i)$ and $P_{\text{dis}}(\langle F_{\text{input}}\rangle,z_i)$, respectively. Since the input fidelity to both distillation fidelity and probability of success expressions depends on the control variables  $\theta_i, x_i, y_i,$ and $z_i, \forall i \in\mathcal{U}$, we compare the corresponding averages as the probability of successfully generating an e2e entangled link between the QS and user $i\in\mathcal{U}$, that is $P_{\text{in}}(\theta_i,x_i)$, for all possible $z_i$ values. 

Due to limited space, we only show the resulting averages for the case when $z_i=5$, in Fig. \ref{fig_validation_approximation}. We numerically validate, by averaging over 1,000,000 SIS instances, that the approximation is proven to be accurate with a maximum percentage error of below $5\%$. Particularly, the approximation is very accurate with an error below $1\%$, except for the range $P_{\text{in}}(\theta_i,x_i)\in(0.1,0.2)$, where it marginally increases. To be more specific, the largest achieved percentage error was $4.45\%$ in the e2e fidelity, which is below $5\%$, and it occurred at $P_{\text{in}}$ values of $0.135$ in the entanglement distillation scheme with $z_i = 5$, as shown in Fig. \ref{fig_validation_approximation}. 

\begin{figure}
\begin{center}
\includegraphics[width=\columnwidth]{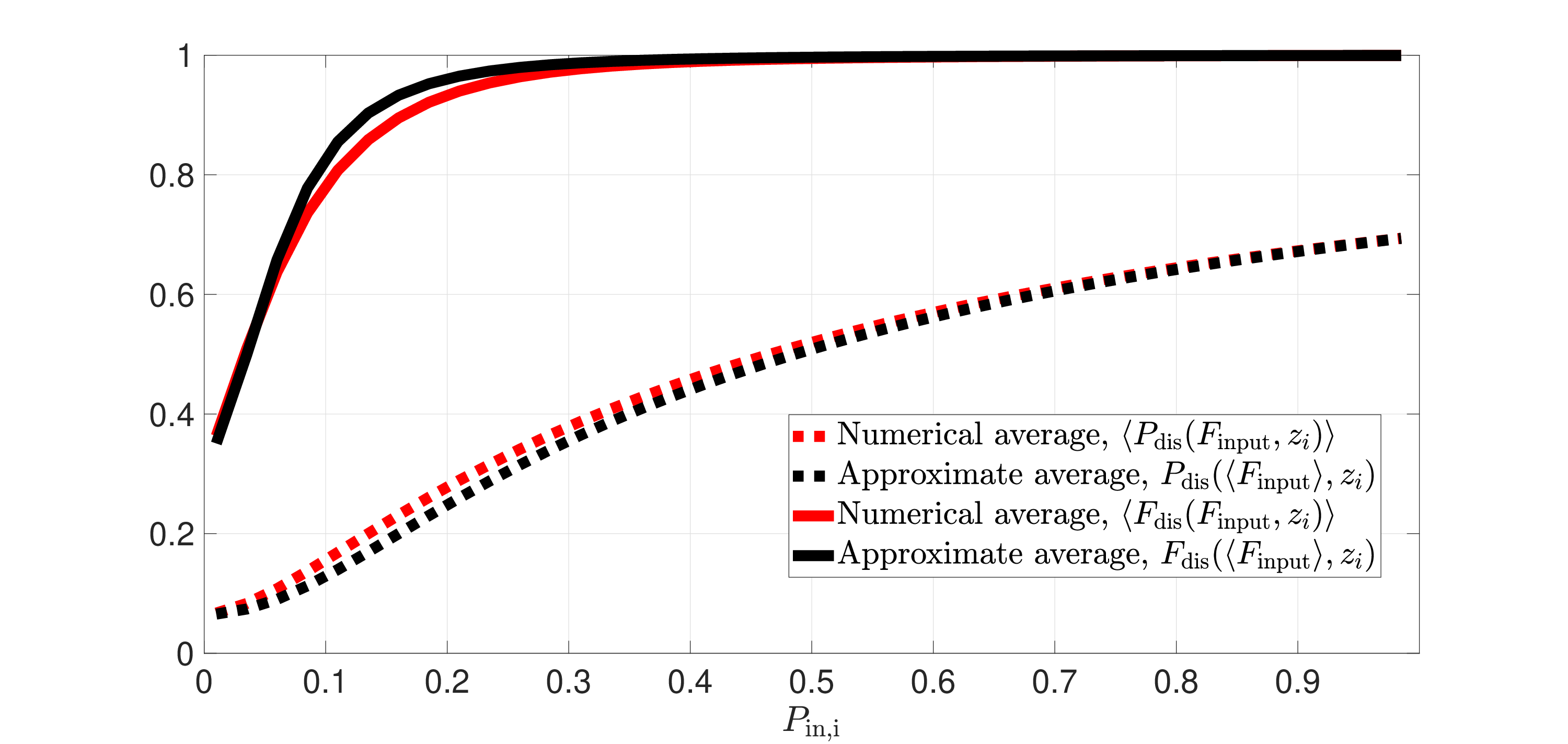}\vspace{-0.3cm}
\caption{A comparison between the exact numerical average and considered approximate average entanglement distillation fidelity and probability of success. The comparison is performed as the probability of successfully establishing an e2e entangled link between the QS and user $i\in\mathcal{U}$ is varied for the case of $z_i=5$.}\vspace{-0.15cm}
\label{fig_validation_approximation}
\end{center}
\end{figure}

\ifCLASSOPTIONcaptionsoff
  \newpage
\fi

\vspace{-3mm}\begin{spacing}{0.99}
\bibliographystyle{ieeetr}
\def\baselinestretch{0.86}
\bibliography{references}\vspace{-2mm}
\end{spacing}

\end{document}